\theoremstyle{plain}     % default. it sets the text in italic and adds extra space above and below the
\newtheorem{theorem}{Theorem}
\newtheorem{conjecture}{Conjecture}
\newtheorem{corollary}{Corollary}
\newtheorem{lemma}{Lemma}
\newtheorem{proposition}{Proposition}
\theoremstyle{definition} % adds extra space above and below, but sets the text in roman.
\newtheorem{definition}{Definition}
\theoremstyle{remark} % Font is set in roman, with no additional space above or below.
\newtheorem{remark}{Remark}
\renewenvironment{proof}[1][Proof]{\noindent \emph{#1.} }{\ $\Box$ \medskip}
\begin{document}

\title{Optimizing Voting Order on Sequential Juries: \\ A Median Voter Theorem and Beyond\footnote{
To appear in: \emph{Social Choice and Welfare} } }
\author{
Steve Alpern \qquad\quad  Bo Chen\thanks{Corresponding author: b.chen@warwick.ac.uk;
ORCID: 0000-0001-7605-9453} \\
Warwick Business School, University of Warwick \\
Coventry, United Kingdom }
\date{October 8, 2021}

\maketitle

\begin{abstract}
We consider an odd-sized ``jury'', which votes sequentially between two equiprobable states of
Nature (say $A$ and $B$, or Innocent and Guilty), with the majority opinion determining the
verdict. Jurors have private information in the form of a signal in $[-1,+1]$, with higher
signals indicating $A$ more likely. Each juror has an \emph{ability} in $[0,1]$, which is
proportional to the probability of $A$ given a positive signal, an analog of Condorcet's $p$ for
binary signals. We assume that jurors vote \textit{honestly} for the alternative they view more
likely, given their signal and prior voting, because they are experts who want to enhance their
reputation (after their vote and actual state of Nature is revealed). For a fixed set of jury
abilities, the reliability of the verdict depends on the voting order. For a jury of size three,
the optimal ordering is always as follows: middle ability first, then highest ability, then
lowest. For sufficiently heterogeneous juries, sequential voting is more reliable than
simultaneous voting and is in fact optimal (allowing for non-honest voting). When average ability
is fixed, verdict reliability is increasing in heterogeneity.

For medium-sized juries, we find through simulation that the median ability juror should still
vote first and the remaining ones should have increasing and then decreasing abilities.

\smallskip\noindent
\textbf{Keywords:} voting, Condorcet, verdict reliability
\end{abstract}

\section{Introduction}
This paper extends the Condorcet Jury Theory to juries that vote sequentially (knowing earlier
votes) rather than simultaneously. Since we consider jurors who are heterogenous with respect to
their \textit{ability} to determine the true state of Nature (equiprobable $A$ or $B$, Innocent or
Guilty) it turns out that the order in which they vote affects the \textit{reliability} of the
majority verdict. By reliability we simply mean the probability that the majority verdict agrees
with the actual state of Nature. We prove that for a jury of size three with any fixed abilities,
the voting order with greatest reliability is median ability first, then highest ability, then
lowest. Using simulation, we suggest that, for medium-sized juries, the median ability juror should
still vote first. Then the abilities should increase, and after the highest ability juror votes,
the rest should vote in decreasing order. An intuition for this is that if the highest-ability
juror votes first, then herding may reduce the contribution of the others; if he votes last, then
the majority verdict may be decided before he even gets to vote.

In a model where jurors have private information in the form of a binary signal ($A$ or $B$),
Condorcet showed that for jurors of common ability $p>1/2$ (probability of a correct signal), or of
varying abilities all at least $p$, the reliability approaches $1$ as the jury size increases.

Our model is a little more sophisticated than that of Condorcet in two ways: signals and abilities.
Condorcet's voters get a binary signal, for $A$ or for $B$. Our jurors get a signal $s$ in the
interval $[-1,+1]$, with higher signals giving a higher probability of $A$. The signal $s=0$ is
neutral and useless, as $A$ and $B$ remain equally likely after such a signal. In analogy with
Condorcet's probability $p$ (that the binary signal is correct) our jurors have an ability in the
interval $[0,1]$, which is proportional to the probability of $A$ given a positive signal $s$,
$0<s\leq1$. An ability of $0$ makes that probability $1/2$, the same as the \emph{a priori}
probability and hence useless (no ability at all). Jurors of higher ability have, in general, more
useful private information. When a juror of given ability comes to vote, he will vote for the
alternative, $A$ or $B$, that he views as more likely, given prior voting and his private signal
$s$. This type of voting, called \textit{honest voting}, models the jurors as experts who have
known abilities by reputation. They wish to vote for the alternative that turns out to be true,
assuming Nature is eventually revealed. This would be true if they are making short term
predictions (say weather or the economy). This type of voting by experts was introduced by
\citet{OtSo01}{, although they do not employ a majority verdict, which we will discuss in detail in
the next section}. Overall, the model of signals and abilities used here is the same linear signal
model introduced in \citet{AlCh17b} in a simpler voting model and is similar in spirit to the
discrete signal model of \citet{AlCh17a}. The way in which a juror's signal and ability affect his
judgement of the probability of $A$ is described fully in Section~\ref{sec:preliminaries}.

Given the above model, we are able to address four questions about the comparative reliability of
different voting schemes or voting orders. The main question addressed in this article is the
optimal voting order for jurors of different abilities. Putting higher ability jurors first is more
likely to have them vote before a majority verdict is established, but also may create negative
herding effects. We also consider when sequential honest voting, where jurors can take into account
previous voting, is better or worse than simultaneous voting (or secret ballot)as in the Condorcet
model. We also answer the question of whether, for fixed average ability, a more heterogeneous or
homogeneous jury has higher reliability when voting in the optimal sequential order.  Finally, we
ask when sequential honest voting is optimal, or strategic, in that it maximizes the reliability of
the verdict. If not, the jurors might be incentivized to jointly modify their voting thresholds to
produce a better verdict, which could mean possibly voting for the alternative that a juror views
as less likely. This might be accomplished by giving all jurors a reward later, if majority verdict
turns out to be correct, or perhaps by giving them stock in the company making the decision. Since
we are dealing with majority verdicts and, for a single member jury, all the above distinctions are
meaningless, we consider a jury of three experts, where we find sharp answers to all of our above
questions.

We note that while electorates in preference voting are typically large, juries in information
voting are typically small, and often of size three, such as refereeing team in tennis or boxing,
appellate levels in legal decisions, doctors giving second or third opinions on the necessity of an
operation. For juries of size three we are able to answer all of our above questions regarding
comparative reliability.

{This paper is organized as follows. After a literature review in the next section, we present in
Section~\ref{sec:warm-up} some simple results and intuitions under some simplified settings, before
formally introducing our model in Sections~\ref{sec:preliminaries} and
\ref{sec:thresholds&reliability}. We present our main results in
Sections~\ref{sec:unanimity}--\ref{sec:comparison-with-other-schemes}, which are summarized as
follows:}

\begin{enumerate}
\item {\textbf{Optimal order of last two jurors} (Theorem~\ref{thm:last-two}). For any odd- sized
    jury, reliability is maximized when the last two jurors vote in seniority order of ability.}
\item \textbf{Optimal voting order} (Theorem~\ref{thm:optimal-sequence}). For juries of distinct
    abilities, the unique voting order that maximizes the reliability of the verdict for honest
    sequential voting is given as follows: the median ability expert should vote first, then the
    expert of highest ability, and finally the expert of least ability. (This order was suggested
    by numerical work in \citet{AlCh17a}.) This is our main result and by far the hardest to
    prove. See also the introduction of the more general Ascending-Descending Order (ADO)
    ordering in Section~\ref{sec:large-juries}: The median ability juror votes first, then the
    voting is in increasing order up to the highest ability. After the highest votes, the
    remaining jurors vote in decreasing order of ability.
\item {\textbf{Monotonicity of reliability in ability}
    (Theorem~\ref{thm:reliability-vs-ability}). Abler juries give more reliable verdicts.}
\item \textbf{Seniority vs.~anti-seniority voting orders} (Theorem~\ref{thm:seniority-sequence}).
    The seniority voting order (decreasing order of ability) has a strictly higher reliability of
    the verdict than the anti-seniority order (increasing order of ability), unless all jurors
    have the same ability.
\item \textbf{Sequential vs.\ simultaneous voting}
    (Theorem~\ref{thm:simultaneous-vs-sequential}). When the abilities of the jurors are
    sufficiently homogeneous, simultaneous voting has higher reliability than sequential voting.
    But when the abilities are sufficiently heterogeneous, honest sequential voting (in the
    optimal order) has higher reliability. (We provide suitable indices of homogeneity and
    heterogeneity.)
\item \textbf{Effect of diversity on reliability} (Theorem~\ref{thm:monotonicity}). For a jury of
    fixed average ability, reliability is an increasing function of the heterogeneity index.
\item \textbf{When is honest sequential voting strategic, that is, (perfectly) optimal?}
    (Theorem~\ref{thm:honest-vs-strategic}). When the abilities of the jurors are sufficiently
    heterogeneous, honest sequential voting in the optimal order is strategic in that it
    maximizes reliability with respect to any (not just honest) strategy thresholds.
\item {\textbf{While sequentially voting, if each juror also reveals his signal, what is the
    optimal order?} (Theorem~\ref{thm:seq-deliberation}). The optimal order of sequential
    deliberation is the seniority order.}
\end{enumerate}
\vspace{-\smallskipamount}{Note that the proofs of all the aforementioned eight theorems are
completely independent except that Theorem~\ref{thm:optimal-sequence} is based on
Theorem~\ref{thm:last-two}. The order in which these later results are presented could have been
different. In Section~\ref{sec:large-juries}, we consider majority voting of medium-sized juries
and propose an analog of the optimal rank ordering of abilities for three-member juries. We
complete our paper by making some concluding remarks in Section~\ref{sec:conclusions}.}

\medskip
To see how these results aid an organization that has to make a binary decision, consider that the
organization has hired three experts of distinct abilities and wishes to determine the optimal
voting mechanism. First, suppose it has no additional funds to incentivize strategic voting, so
voting is honest. If the abilities of the fixed set of experts are sufficiently homogeneous, it
should keep the voting silent (or simultaneous) by not allowing later voters to know the results of
earlier voting (Theorem~\ref{thm:simultaneous-vs-sequential}). (This is done for preference voting
in US elections, but for different reasons.) If the ability set is sufficiently heterogeneous, then
later voters should be told the outcome of earlier votes
(Theorem~\ref{thm:simultaneous-vs-sequential}) and they should vote in the order given by
Theorem~\ref{thm:optimal-sequence}. We note that other variations are possible, for example, later
voters could be told the votes of only some of the earlier voters. If the jury is not fixed, then
Theorem~\ref{thm:monotonicity} suggests that, assuming the cost of an expert juror is increasing in
ability, heterogeneous juries might be preferable. Finally, we consider whether it might be
advantageous to pay a bonus to each expert for a correct jury (majority) decision, assuming the
correctness of the verdict becomes common knowledge in the near future, as with weather
predictions. In general, this will be a cost-benefit problem, but
Theorem~\ref{thm:honest-vs-strategic} shows that such payments are wasteful if the ability set is
sufficiently heterogeneous.

For larger juries of odd size $n=2m+1$, we propose the Ascending-Descending Order (ADO), where the
most able $m+1$ jurors vote first in increasing ability order and then the least able $m$ jurors
vote in descending ability order. We already know from Theorem~\ref{thm:optimal-sequence} that ADO
is optimal for $n=3$. Using simulation techniques, we establish that the ADO is an excellent
heuristic (and possibly optimal) for $n=5$ and $n=7$.

To conclude the section, we remark that our problem can be viewed as a simple optimization problem
faced by a decision maker, who has an unordered set of people to form a jury to help him make a
binary decision. He wants to arrange their voting order so as to make the majority decision of the
jury as reliable as possible. On the other hand, we can also view our problem more game
theoretically as a Stackelberg game or principal agent problem. First, we are given an unordered
set of jurors. The Principal, who moves first, selects an order in which they must vote. He wins
the game (payoff $1$; otherwise payoff $0$) if the \textit{majority verdict} is the true state of
Nature. Then the jurors vote openly in the order selected by the Principal. In this sub-game, a
juror wins (payoff $1$; otherwise payoff $0$) if his \textit{vote} is the same as the true state of
Nature. That is, the Principal cares only about the correctness of the verdict, while the jurors
care only about the correctness of their own votes. In the sub-game, honest voting is exactly the
unique Nash equilibrium. In fact, no juror can improve his expected payoff (probability he votes
correctly) by changing from honesty, if the prior voters stick to honest voting. Thus it is
stronger than the usual definition of Nash equilibrium.

\section{Literature}
\label{sec:literature}

The celebrated Condorcet Jury Theorem \citep{Condorcet1785} is concerned with a model with an odd
number $n$ of jurors who receive independent binary signals for one of two states of Nature. Each
juror receives the correct signal with the same probability $p$. Condorcet's Jury Theorem (CJT)
states that if each juror votes his signal, the probability that the majority vote is correct goes
to $1$ as $n$ approaches to infinity if $p>1/2$. There is a large body of literature on extensions
and discussions of this result. In this paper we are interested in extensions to \emph{sequential
voting} of heterogeneous ability jurors, where the jurors vote in order, with knowledge of all
previous votes. In particularly we are interested in work related to voting order.

In \citet{OtSo01} jurors have heterogeneous abilities (as in this paper), who care only about being
right (they are ``experts'', with reputations to uphold), rather than about obtaining a correct
verdict. We call them \emph{honest} voters. {The main difference between our model and theirs is
that our model stays within the Condorcet Jury context, where the verdict is simply the majority
vote, while their model has a ``decision maker'' who determines the group decision after all votes
have been cast.} They talk about the problem of groupthink, herding and conformism, and the last is
also discussed in terms of committee decisions regarding the ``secretary problem'' in
\citet{AlBa16}.

\citet{DePi00} mainly consider the equilibria of simultaneous
versus sequential voting, but they do make an important comment on voting
order of heterogeneous jurors" in their Conclusions (p.~48):

\begin{quotation}
\noindent ``\ldots if voters are endowed ex ante with differential information (some voters can be
better informed than others), knowing which voters voted in favor and which against can affect the
choice of a later voter. It can be shown that, in a common-value and two signal environment (as in
Sec. IIIC above), if the player's signals are completely ordered (in the sense of Blackwell), then
it is optimal to have the better informed vote earlier. This provides an interesting contrast to
the findings of Ottaviani and Sorensen (2001). They obtain the opposite optimal order in an
environment in which information providers care not about the outcome but about appearing to be
well informed. It is not difficult, however, to construct examples in which having the
best-informed voter vote first is not optimal. Hence it seems unlikely that general insights into
this question can be obtained."
\end{quotation}

\noindent
The point of this paper is that we are indeed able to obtain general insights into this question.

\citet{AGKP08} discuss the optimal voting order of experts and cite examples in which courts of
judges follow either anti-seniority (increasing order of ability, in our terminology) or seniority
(decreasing order) orders, respectively, in the ancient Sanhedrin court and in the contemporary
American Supreme Court. Voting order in selection committees is analyzed in \citet{AlGaSo10}, but
there voting is by veto.

A sequential voting model, with discrete but non binary signals, was introduced by \citet{AlCh17a}.
Ability levels were also discrete, so propositions about which ability orderings are best could be
obtained by exhaustive search among the finite number of possibilities. Most of that paper is about
\emph{strategic} voting where voters wish to obtain the correct majority verdict, even if this
means voting for an alternative which a juror believes to be less likely.  For the question of the
optimal ability ordering of three jurors with abilities $a<b<c$, that paper shows by exhaustive
search that $b,a,c$ and $b,c,a$ are the optimal orderings (the order of the last two voters in
strategic voting does not matter).  For honest voting, it is shown for the finite number of cases
considered, that the ordering $b,c,a$ is optimal. This is the observation that led us to attempt
the general algebraic proof, where $a,b$ and $c$ are real numbers, that is given here. The first
algebraic proof of this kind was obtained recently by \citet{AlCh17b} for a simpler voting model,
where first two jurors vote simultaneously and then the remaining (third) juror casts the deciding
vote if there is a tie. This is a much simpler model algebraically as their are only two voting
thresholds to consider (for the casting voter), depending on whether the the more able juror voted
$A$ or $B$. \citet{AlCh17a} also consider larger juries, up to size 13. It uses simulation to
compare randomly generated juries, which vote honestly in increasing, decreasing or random ability
order. For all these sizes, decreasing order has higher reliability than random, and random has
higher reliability than increasing order.

For a simpler voting model with n-1 jurors voting simultaneously, followed by a tie breaking vote
(if necessary) of the remaining voter, \citet{AlCh17b} showed that for $n=3$ the voter of median
ability should have the casting vote, with honest voting.

A related literature compares the efficacy of sequential versus simultaneous voting. A potential
problem with the former is the possibility of herding and information cascades, especially when
more able voters are at the beginning of the sequence. Here, voters ignore their own private
information to vote instead based on a consensus achieved by earlier voters. On the other hand,
with sequential voting each juror has more information to go on. \citet{BiHiWe92} explore the
probability of informational cascades and also the probability that they converge on an incorrect
outcome. They explain many recognized phenomena based on this convergence. In this respect,
Ottaviani and Sorensen find that ``increasing the quality of some experts on the committee can
exacerbate herd behavior and hence decrease the amount of information collected by the decision
maker''. A more recent approach to this problem, by computer scientists exploring sequential voting
in the context of what are known as social recommendation systems (such as Amazon product
valuations, written with knowledge of previous customer's evaluations) is given by
\citet{CeKrKo16}. They analyze student-reported learning and find that ``sequential voting systems
can surface better content than systems that elicit independent votes''. A related paper, which
deals with sequential evaluations, is \citet{BeSc16} and considers the possibility of only
revealing the mean of prior evaluations. This brings the suggestion of varying our model so that
jurors know the prior unascribed vote (such as three votes for $A$ and one for $B$) without
revealing how each prior juror individually voted. This model will be for our future work.

Condorcet theory has also been applied to particular problems:  statistical estimation in
\citet{Pivato13}, forecasting US election results in \citet{Murr15} and for project evaluation in
\citet{Koh08}. The last of these considers both heterogeneous jurors and voting order.

\section{Simple results on sequential juries}
\label{sec:warm-up}

As a warm-up, this section serves the purpose of giving some simple results and intuitions for
optimally reliable voting orders on heterogeneous sequential juries, {without resort to the full
continuous signal model mentioned above and defined precisely later in equation \eqref{eqn:F&G} of
Section~\ref{sec:preliminaries}.} We first consider \emph{Condorcet jurors}, who are defined as
receiving binary signals ({for the two possible states of Nature $A$ or $B$}) and then we consider
what we call \emph{sophisticated jurors}, who receive real number signals in the interval $[-1,
+1]$.

\subsection{Condorcet jurors}

We begin with a simple model with three jurors, who we call jurors $J_1$, $J_2$ and $J_3$ in terms
of voting order and get \emph{correct} binary signals $s\in\{A, B\}$ with probabilities $p_1$,
$p_2$ and $p_3$, respectively. We also write $\{ p_1,p_2, p_3\} =\{p,q,r\}$ where $p\leq q\leq r$,
and also call the jurors P, Q and R in terms of ability. So if the juror of highest ability votes
first, then we have $J_1$ = R. We similarly label the signals received by the jurors as
$\{s_{1},s_{2},s_{3}\}$ or $\{s_{p},s_{q},s_{r}\}$. For example, signal $s_{1}$ is correct with
probability $p_1$. Given the three signals $\{s_{p},s_{q},s_{r}\}$, it is easy to calculate the
most likely state of Nature, which we call the \emph{Full Information Solution} (FIS). {(For a jury
of more general jurors than Condorcet, the FIS is expressed in \eqref{eqn:FIS}.)} Clearly, such a
state is either $s_{r}$ (dictator verdict) or $s_{p}=s_{q} \neq s_r$ (majority verdict). If
$s_{r}=A$ and $s_{p}=s_{q}=B$, the posterior probability of $A$ is given by
\[
\frac{r( 1-p) ( 1-q) }{r( 1-p) (1-q) +( 1-r) pq}.
\]
A simple calculation shows that the signal $A$ of juror R is more likely to be the state of Nature,
\[
\frac{r( 1-p) ( 1-q) }{r( 1-p) (1-q) +( 1-r) pq} \geq \frac{1}{2},
\]
if and only if
\[
r \geq \bar{p}\equiv\frac{pq}{1-p-q+2pq}.
\]

\begin{proposition}\label{pro:FIS}
The FIS can be obtained from honest sequential majority voting for all ability
triples if and only if the highest ability juror votes second.
\end{proposition}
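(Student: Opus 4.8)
The plan is to prove the ``if and only if'' by directly tracing honest sequential majority voting under every signal profile $(s_p,s_q,s_r)\in\{A,B\}^3$ and comparing the resulting majority verdict with the FIS. I would begin by recording two facts that trim the analysis. First, by the likelihood-ratio computation already displayed, the FIS departs from the naive majority of the three signals only on the ``lone-R-dissent'' profiles $s_r\neq s_p=s_q$: there the FIS follows R (the dictator verdict) exactly when $r\geq\bar{p}$ and follows the pair P, Q otherwise, while in every profile where R agrees with at least one of P, Q the FIS is simply that majority (the product of likelihood ratios is monotone in ability, so R's side prevails). Second, a juror facing a single conflicting prior vote from a strictly less able colleague votes his own signal; ties in ability are resolved by the convention that a juror votes his own signal, which affects no conclusion.

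\emph{Sufficiency (R votes second).} I would trace the three votes. $J_1$ has no prior information and votes $s_1$. Since R is at least as able as $J_1$, R---voting second---never herds: whatever $s_1$ he observes, R votes $s_r$. Two sub-cases remain. If $s_1=s_r$, the first two votes already form a majority for $s_r$, so the verdict is $s_r$; as R lies in this majority, the first recorded fact gives $\text{FIS}=s_r$, matching. If $s_1\neq s_r$, the first two votes split, so $J_3$ is the tie-breaker; crucially, because R's honest second vote must equal $s_r$ and $J_1$'s first vote must equal $s_1$, the split reveals both prior signals exactly. Hence $J_3$ decides with full knowledge of $(s_1,s_r,s_3)=(s_p,s_q,s_r)$ and outputs precisely the FIS, which becomes the verdict. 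So the verdict equals the FIS in every profile.

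\emph{Necessity (R first or third).} If R votes first, the second voter---being less able---herds to $s_r$ whenever the signals conflict and agrees otherwise, so the first two votes are always $s_r$ and the verdict is the dictator verdict $s_r$ in every profile. Choosing abilities with $r<\bar{p}$ (feasible with $p\leq q\leq r$, e.g.\ $p=q=0.6$, $r=0.65$, where $\bar{p}=0.36/0.52\approx0.692$) makes the FIS equal the pair's value on the profile $s_r\neq s_p=s_q$, so honest voting fails to reproduce it. If R votes third, take any profile in which the two lower-ability jurors receive the same signal; their honest votes already form a majority before R votes, forcing that common value as the verdict. Taking abilities with $r\geq\bar{p}$ (e.g.\ $p=q=0.6$, $r=0.8$) and the profile $s_p=s_q\neq s_r$ makes the FIS the dictator verdict $s_r$, contradicting the forced verdict. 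Hence R voting second is necessary.

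The main obstacle is the disagreement sub-case of sufficiency: one must argue that a split of the first two votes transmits full information to $J_3$, which hinges on the fact that the most able juror, voting second, cannot be induced to vote against his own signal, so the split cannot be an artifact of herding. The remaining work is bookkeeping---verifying that the counterexample abilities straddle the threshold $\bar{p}$ while respecting $p\leq q\leq r$, and confirming through the monotone likelihood-ratio remark that profiles with R in the majority never need separate treatment.
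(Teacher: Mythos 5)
Your proof is correct and follows essentially the same route as the paper's: sufficiency by noting that with R second the first two honest votes reveal both signals, so a split leaves the third voter to select the FIS, and necessity via the herding cascade when R votes first and the premature majority when R votes last. Your version merely adds some welcome explicitness (concrete abilities straddling $\bar{p}$, the tie convention, and the observation that the FIS sides with R whenever R agrees with at least one other juror), which the paper leaves implicit.
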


\begin{proof}
If the highest ability juror R votes first (R = $J_1$), then $J_2$ and then $J_3$ will copy his vote in a cascade.
This will fail to choose the FIS if $r<\bar{p}$ and $J_2$ and $J_3$ both have the opposite
signal to R, since in this case the FIS is $s_{2}=s_{3}$, while the verdict is
$s_{1}$. If R votes last (R = $J_3$) and $r>\bar{p}$, then R's
signal $s_{r}$ is the FIS. However, the first two jurors $J_1$ and $J_2$ will vote
their signals, creating a majority verdict for their signals, which might not be
$s_{r}$. Hence a jury where R votes first or last does not always give the
FIS.

On the other hand, suppose that R votes second (R = $J_2$). Then $J_1$ and $J_3$ are P and Q, in
some order. So $J_1$ and $J_2$ = R vote their signals. If they are the same, that is
the verdict and the FIS. Otherwise, the last juror $J_3$ votes with full
information (about the signals) and can choose the verdict, so he chooses
the FIS.
\end{proof}

Thus with binary signals, the FIS can be obtained for all ability triples if we always have the highest ability juror
vote second. If he does not vote second, then for some abilities the FIS will not be obtained. The voting positions of
the lower ability jurors do not matter. We shall see that for sophisticate
jurors, the first observation holds (best for highest ability juror to vote second); however, the second
observation no longer holds --- it is best for the weakest ability juror to vote last.
For sophisticated jurors, who receive continuous signals, signals are not revealed through voting. So it is
unreasonable to expect that the FIS can be obtained.

\subsection{Sophisticated jurors}

We now consider jurors who receive not binary signals, but signals in the
interval $\left[  -1,+1\right]  $. For each juror, higher signals give a
higher posteriori probability that Nature is in state $A$, with a signal of
$0$ leaving $A$ and $B$ equiprobable. A juror's signal distribution depends on
the state of Nature and on his \textit{ability} $a$, a number between $0$ and
$1$, in such a way that the probability $p_{a}$ that his signal is correct
(positive when Nature is $A$) is given by the equation
\begin{equation}\label{eqn:Condorcet_juror}%
p_{a}=1/2~+a/4.
\end{equation}
Thus our notion of ability is a proxy for the Condorcet $p$ number in that it
is positive linearly related. The signal and ability model will be formally
defined in the next section, but this is all the reader needs now to follow
our informal argument for Proposition~\ref{pro:main_special}. We assume that the jurors vote
sequentially and that each juror votes for the state of Nature he views as
more likely, given prior voting and his private signal. We call this
\textit{honest voting}. When a jury votes in a fixed order of abilities, there
is a probability, called the reliability (of the verdict), that the majority
vote is for the actual state of Nature. In general, this reliability depends
on which of the six possible voting orders are used. Our main result, Theorem~\ref{thm:optimal-sequence},
will establish that there is an optimal order (one maximizing reliability)
which is independent of the actual abilities and just depends on their rank,
namely: median, high, low. We now prove a much easier result, which states
that \textit{if} there is an optimal rank order of this type, it much be the
stated one. The proof of Proposition~\ref{pro:main_special} can be understood without all the
details of the ability-signal model of Section~\ref{sec:preliminaries}, but the reader may wish to
reread the proof again after that section.

\begin{proposition}\label{pro:main_special}
Suppose we label the abilities of the three jurors as $a,b,c$, with $0\leq
a<b<c\leq1$. If there is a unique voting order which maximizes reliability
that is independent of the actual values of $a,b,c$, then it must be $b$
first, then $c$, then $a$.
\end{proposition}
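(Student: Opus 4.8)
The plan is to treat this as a pure identification problem: under the hypothesis a single rank order is optimal for \emph{every} admissible triple $(a,b,c)$, so it suffices to locate that order by testing it against a few tractable configurations rather than to compute reliabilities in general. First I would invoke Theorem~\ref{thm:last-two}: since the optimal order must, at every triple, place the last two jurors in decreasing (seniority) order of ability, the hypothesized unique optimum cannot end in an increasing pair — for if it did, its last-two swap would be at least as reliable and hence also optimal, contradicting uniqueness. This eliminates $(a,b,c)$, $(b,a,c)$ and $(c,a,b)$, leaving exactly three candidates: $(a,c,b)$, $(b,c,a)$ and $(c,b,a)$.

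To eliminate two of these three I would exploit value-independence by pushing the abilities to degenerate equal-ability limits, where the reliability computation collapses to two parameters and the candidates pairwise coincide. Consider first the limit $c\to b$, i.e.\ two jurors of a common high ability and one of low ability $a$: here $(b,c,a)$ and $(c,b,a)$ both become the order (high, high, low), while $(a,c,b)$ becomes (low, high, high), and the sub-claim is that placing the weak juror last is strictly better than placing him first, eliminating $(a,c,b)$. Consider next the limit $b\to a$, i.e.\ two jurors of a common low ability and one of high ability $c$: now $(a,c,b)$ and $(b,c,a)$ both become (low, high, low), while $(c,b,a)$ becomes (high, low, low), and the sub-claim is that letting the strong juror vote second rather than first is strictly better — intuitively, voting first triggers a herding cascade among the two weak jurors and wastes their information — which eliminates $(c,b,a)$. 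The only survivor is $(b,c,a)$, namely median first, then highest, then lowest.

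Two remarks make the limiting argument rigorous. The comparisons are taken on the boundary of the parameter region, but a strict inequality there persists on an open neighborhood of distinct abilities by continuity of reliability in $(a,b,c)$; exhibiting even one distinct triple at which $(a,c,b)$ (resp.\ $(c,b,a)$) is strictly suboptimal suffices, since the hypothesized order is optimal at every triple and therefore cannot equal either of these. Uniqueness in the hypothesis is then used only to phrase the conclusion as ``it must be'' $(b,c,a)$, the elimination having shown that the optimal order lies in $\{(a,c,b),(b,c,a),(c,b,a)\}$ and is neither of the first two.

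The main obstacle is establishing the two sub-claims — that (high, high, low) beats (low, high, high) and that (low, high, low) beats (high, low, low) — since each still invokes the honest-voting thresholds and posterior-updating rules of Section~\ref{sec:preliminaries}. The payoff of the equal-ability limits is that each comparison involves only one repeated ability and one distinct ability, so the relevant reliability expressions, and the sign of their difference, can be pinned down by a short direct calculation (or, for the weak-juror-last claim, by analyzing the tie-break role of the final voter). I expect verifying these two sign conditions to be the only genuinely computational part; everything else is bookkeeping driven by Theorem~\ref{thm:last-two} and continuity.
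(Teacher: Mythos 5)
Your proposal is correct, but it takes a genuinely different route from the paper's. The paper proves the proposition by examining a single extreme configuration --- ``two yokels and a boffin'' ($a,b\approx 0$, $c\approx 1$) --- and arguing informally via thresholds and herding that there the boffin cannot vote last (the yokels may lock in the verdict before he votes), should not vote first (when his signal is near zero he gains from seeing an earlier vote), and should be preceded by the smarter yokel; this pins the optimum at that one configuration to $(b,c,a)$. You instead decompose the identification into three eliminations: Theorem~\ref{thm:last-two} to discard the orders ending in an increasing pair (a legitimate, non-circular use, since the paper states that theorem does not depend on this proposition), and the equal-ability limits $c\to b$ and $b\to a$ to discard $(a,c,b)$ and $(c,b,a)$. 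The logic is sound, including the continuity remark that a strict inequality at a degenerate triple persists at nearby distinct triples and that one refuting triple per candidate suffices. Your two sub-claims do hold against the piecewise formula \eqref{eqn:reliability}: for instance $Q(b,b,0)-Q(0,b,b)=b/32>0$, and $Q(a,c,a)-Q(c,a,a)=a^{2}/(16c)>0$ for small $a>0$ with $a\le c/2$ (the strong first voter pushes $(c,a,a)$ into the herding set $H_2$, giving reliability only $(2+c)/4$). What your route buys is modularity and, once those two sign computations are written out, a somewhat more rigorous argument than the paper's avowedly informal one; what it costs is reliance on Theorem~\ref{thm:last-two}, whose proof (via Lemma~\ref{lem:unanimous-voting}) is computer-assisted, whereas the paper's warm-up argument is self-contained. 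The one thing you must not leave out of a final write-up is the actual verification of the two sign conditions --- as stated they are promissory notes, though both are short calculations with the formulas of Section~\ref{sec:thresholds&reliability}.
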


\begin{proof}
First suppose $c$ is close to $1$ and $a$ and $b$ are both close to $0$, what
we call ``two yokels and a boffin'', where the boffin has ability $c$ and the
dumber an smarter yokel have respective abilities $a$ and $b$. We establish in
this case the the stated order maximizes reliability, so if some order is
optimal for all juries, it must be this one.

With our assumption of honest voting, either yokel voting after the boffin
will have probability about $3/4 \approx p_{c}$ that the boffin's vote is
correct (even after seeing their own pretty useless signals), and hence will
copy the vote of the boffin. So if the boffin does not vote last,  the
reliability of the jury will be  $p_{c}$, which is close to $3/4$. On the
other hand, if the boffin votes last, the verdict may already be decided
before he comes to vote, if the two yokels vote alike, in which case the
verdict will be right with probability $1/2$, as the yokels have no useful
information. The yokels will vote alike with some probability $q$, which is at
least $1/2$, since they vote alike when they get signals of the same sign,
which occurs with probability close to $1/2$. Hence the reliability when the
boffin votes last is approximates
\[
q\left(  \frac{1}{2}\right)  +\left(  1-q\right)  p_{c}<\left(  \frac{1}%
{2}\right)  \left(  \frac{1}{2}\right)  +\left(  \frac{1}{2}\right)
p_{c}<p_{c},
\]
as $p_{c} \approx 3/4$. Hence the boffin cannot vote last in an optimal voting order.

If the boffin votes first, the reliability is exactly the probability $p_{c}$ that he gets it
right.  But suppose the boffin's signal is very very close to $0$, so he still views $A$ and $B$ as
equiprobable. In this case he would prefer to vote second so that he could follow the vote of the
first voter, of ability $x\in\left\{  a,b\right\}  $, and make the correct vote with probability
$p_{x}>1/2$ rather than with probability $1/2$. So it is better for the boffin to vote second than
to vote first. However now it is better for the smarter {yokel} to vote first, so that $p_{x}$ in
the previous argument is in fact $p_{b}$ rather than $p_{a}$. So with two yokels and a boffin, the
optimal ordering is smarter yokel, then boffin, then dumber yokel, or $\left( b,a,c\right)$.
\end{proof}

We note that this is a very special case of our main result, Theorem~\ref{thm:optimal-sequence}, which
says that the ordering $\left(  b,a,c\right)  $ is always optimal. Neither
that result or any other result requires Proposition~\ref{pro:main_special}, which is why we are
content with the above fairly informal proof of Proposition~\ref{pro:main_special}.

\section{Signals, abilities and voting thresholds}
\label{sec:preliminaries}

In both the Condorcet model and in our model, there are two equiprobable states of Nature, called $A$ and $B$.
In the Condorcet model, private information of jurors is in the form of a binary signal $\{A, B\}$. The
ability of a juror to guess the true state is given as a probability, called
$p$, that the signal received corresponds to the actual state of Nature. This
probability $p$ ranges between $1/2$ (useless information) and $1$ (definite
information). In our model, private information of a juror comes in the form
of a continuous signal $s$, which ranges between $-1$ and $+1$, with signal $0$
neutral, negative signals indicating $B$ is more likely and positive signals
indicating that $A$ is more likely. Our analog of Condorcet's probability $p$
is a number, called the \textit{ability} $a$, which ranges from $0$ to $1$. The
number $a$ is linearly related to the conditional probability $p$ of $A$,
given a positive signal $s$.
%In particular, this conditional probability $p$ is given by
%\begin{equation}\label{eqn:Condorcet_juror}
%p=\Pr\left(  A\mid s\geq0\right)  =\frac{1}{2}+\frac{1}{4}\,a.
%\end{equation}
In particular, the conditional probability of $A$, given that a juror of ability $a$ receives a signal $s$ above
a threshold $\tau$, $\tau\in[-1,+1]$, is given by
\begin{equation}\label{eqn:p-tau}
p_{\tau}=\Pr\left( A\mid s\geq\tau\right)  =\frac{1}{2}+\frac{1+\tau}{4}\,a.
\end{equation}
This generalizes equation \eqref{eqn:Condorcet_juror}. For a juror of ability $a=0$, this means
that $p$, or more generally $p_{\tau}$, is $1/2$, the same as the \emph{a priori} probability of
$A$. Thus a juror of ability $0$ essentially has no ability, his signal (his private information)
is useless. On the other hand, if a juror has maximum ability $a=1$ and receives a signal above a
threshold $\tau$ that is very close to $+1$, his conditional probability of $A$ approaches
certainty ($p_{\tau} \approx 1$). The fact that the signal is above the threshold $\tau=-1$ says
nothing about the signal (they are always at least $-1$) and indeed $p_{-1}=1/2$, the same as the
\textit{a priori} probability of $A$. Thus in our model, the ability level $a$ is a proxy for
Condorcet's probability $p$, but in a more general context of continuous rather than binary
signals. We now describe the signal distribution which gives these outcomes
\eqref{eqn:Condorcet_juror} and \eqref{eqn:p-tau}. This continuous signal-ability model has been
used before in the literature \citep[e.g.,][]{AlCh17b}.

The fact that the signal distribution that we use (see Section~\ref{sec:signal-distributions}) is
the unique one that makes our definition of ability  $a$  as linear function of Cordorcet's  $p$ is
shown by \citet{AlChOs21}.

\subsection{Signal distributions}
\label{sec:signal-distributions}

We assume two states of Nature $A$ and $B$, considered as negation of $A$, with
\emph{a priori} probability of $A$ given by $\Pr\left(A\right)  =\theta_{0}$. To simplify
the analysis we will assume the equiprobable case $\theta_{0}=1/2$, although
our results are robust for $\theta_{0}$ values around $1/2$.
Individuals have private information about the state of Nature modeled as a
signal $s$ in the \textit{signal interval} $[-1,+1]$.
Positive signals are indications of $A$; negative signals $B$. The
signal $s=0$ is neutral. Higher positive signals indicate $A$ more strongly;
similarly for negative signals and $B$. Thus a stronger signal is one with
a higher absolute value.

Individual jurors have an ability $a$ in the \textit{ability interval} $[0,1]$, where individuals of higher ability
are generally (but not always) able to make better guesses about the state of Nature. We will define our ability-signal
model in such a way that the conditional probability of state $A$ given a positive signal $s>0$ is proportional to
the juror's ability $a$. In this way our definition of ability is analogous to Condorcet's definition of $p$ as the
probability that the binary signal $A$ corresponds to Nature being in state $A$.
When Nature is in state $A$ (resp.\ $B$), jurors receive independent signals
$s\in [-1,1]$ with probability density given by $f_{a}(s)$ (resp.\ $g_a(s)$)
if they have ability $a$. We make the simplest nontrivial assumption on $f_{a}(s)$
and $g_a(s)$, namely that they are linear in $s$. The slope of the density functions
$f_{a}(s)$ and $g_a(s)$ for a juror of ability $a$ is proportional to $a$. Given that
$f_{a}$ and $g_a(s)$ are density functions on $[-1,+1]$, they take the following form:
\begin{align*}
f_{a}(s) &  =(1+as)/2,\ -1\leq s\leq+1,\text{ when Nature is }A;\\
g_{a}(s) &  =(1-as)/2,\ -1\leq s\leq+1,\text{ when Nature is }B.
\end{align*}
It is easily checked that $f_a(\cdot)$ and $g_a(\cdot)$ defined above are indeed density
functions for any $a\in [0,1]$. The
probability of a correct signal, that is positive when Nature is $A$, is the
area under the $A$ line (and above the $s$ axis) to the right of $s=0$. When
$a=1/2$, this area is $1/2$, showing that a juror with ability $a=0$ is just
guessing (by flipping a fair coin to determine the state of Nature).

The corresponding cumulative distributions of the signal $s$ when Nature is
$A$ or $B$ are given by
\begin{equation}\label{eqn:F&G}
\left\{
  \begin{array}{ll}
    F_{a}(s)=(s+1)(as-a+2)/4,\ -1\leq s\leq+1, & \hbox{when Nature is $A$;} \\
    G_{a}(s)=(s+1)(a-as+2)/4,\ -1\leq s\leq+1, & \hbox{when Nature is $B$.}
  \end{array}
\right.
\end{equation}
Given prior probability $\theta_{0}$ of $A$ and only his signal $s$, a juror of
ability $a$ has a posterior probability $\theta^{\prime}$ of $A$, as given by
Bayes' Law:
\begin{equation}\label{eqn:posteriori}
\theta^{\prime}  =\Pr\left( A\mid s \right) =\frac{\theta_{0} f_{a}(s)}
                 {\theta_{0}f_a(s)+(1-\theta_{0})g_{a}(s)}
                  =\frac{\theta_{0}+as\theta_{0}}{2as\theta_{0}-as+1}
                   =\frac{as+1}{2}\text{ (if }\theta_{0}=1/2).
\end{equation}
Note that for a juror of ability $0$, we have $\theta^{\prime}=\theta_0$ for any
received signal $s$, reinforcing our notion that ability $0$ is no ability at
all. A juror of ability $0$ can do no more than guess. If we wish to view our juror
of ability $a$ as a Condorcet juror, we would say that his probability of a
correct signal (positive when Nature is $A$ or negative when Nature is $B$) is
given by
\[
\int_{0}^{1}f_{a}(s)~ds=1-F_{a}\left(  0\right)  =\frac{1}{2}+\frac{1}{4}\,a,
\]
due to the equiprobability (and hence symmetry) of $A$ and $B$, which establishes
\eqref{eqn:Condorcet_juror}. The more general result \eqref{eqn:p-tau}, for signals $s\geq\tau$, arbitrary
thresholds $\tau$ and $\theta=1/2$, is established by
\begin{align*}
\Pr\left(A\mid s\geq\tau\right)    & =\frac{\left(  1/2\right)  \left(
1-F_{a}\left(  \tau\right)  \right)  }{\left(  1/2\right)  \left(
1-F_{a}\left(  \tau\right)  \right)  +\left(  1/2\right)  \left(
1-G_{a}\left(  \tau\right)  \right)  }\\
& =\frac{1}{2}+\frac{1+\tau}{4}\,a.
\end{align*}

\subsection{Strategy and jury reliability}

A strategy for a juror is a threshold $\tau$, depending on previous voting, if any, such that the
juror votes $A$ with signal $s\geq\tau$ and $B$ with signal $s<\tau$. A strategy profile is a list
of strategies for each juror.

A strategy profile is said \emph{honest} (or \emph{naive}) if the thresholds are such that every
juror votes for the alternative that he believes is more likely, given the \emph{a priori}
probability of $A$, his private signal, and any prior voting. {If interpreted as a game (which is
not required), we would say that a juror ``wins'' if his own vote is correct and loses otherwise.}
In taking into consideration prior voting, each voter assumes previous voters are honest. (NB: the
definition of honest voting is recursive.) {As observed at the end of the introduction section,
honest voting by a juror, by definition, uniquely maximizes his probability of a correct vote under
the assumption that all prior voters (not necessarily later voters) were honest. So if his payoff
is the probability he votes correctly, then honest voting is the unique Nash equilibrium. It
actually has stronger equilibrium properties because a juror would not want to change his strategy
(honesty) even if some later voters change theirs.}

We define the \emph{reliability} of a voting scheme as the probability that the majority verdict is
correct under this voting scheme. With equiprobable alternatives, a simple symmetry argument shows
this is the same as the probability of majority verdict $A$ when Nature is in state $A$. We ask the
simple question: Given a set of $n$ abilities, which sequential voting order of these abilities
will maximize the reliability?

\section{Thresholds and reliability functions}
\label{sec:thresholds&reliability}

In this section we determine the thresholds (on his private signal) that an honest voter
adopts, based on previous voting. Using these results, we determine the reliability (of the verdict)
for a jury of honest voters of given abilities who vote in a given order. In this and the next section
we signpost the main steps of the algebraic calculations, with details left in the appendix.

\subsection{Threshold determination}
\label{sec:threshold}

Consider the problem faced by the voter of ability $a$, given \emph{a priori} probability $\theta$ of $A$
before he looks at his signal $s$. What is his
honest threshold? His posteriori probability of $A$ is given by $\theta^{\prime}$ as given by equation \eqref{eqn:posteriori}
with $\theta_0$ replaced by $\theta$. Hence
\[
\theta^{\prime}=\frac{\theta + a s\theta}{2a s\theta - a s+1}.
\]
The honest threshold $\tau$ is the value of $s$ for which $\theta^{\prime}=1/2$,
or
\[
\frac{1}{2}=\frac{\theta+as\theta}{2as\theta-as+1}.
\]
Solving for $s$ and making this value the honest threshold $\tau$ gives $\tau =(1-2\theta)/a$.
Of course, if $(1-2\theta)/{a}> 1$, this means always vote $B$ (same as threshold
$\tau=1$), and if $(1-2\theta)/{a} < -1$ this means always vote $A$ (same as
threshold $\tau=-1$). Such phenomenon is known as \emph{herding} behavior, where
agents ignore their own private information and follow prior agents. We can take the limit as $a\rightarrow 0+$ to make the
same arguments if $a=0$. Therefore, the threshold of a voter of ability $a$ with
\emph{a priori} probability $\theta$ of $A$ is given as follows:
\begin{equation}\label{eqn:honest_threshold}
\tau_a(\theta) = \left\{
                    \begin{array}{ll}
                      0, & \hbox{if } \theta = 1/2; \\
                      -1, & \hbox{if $a < 2\theta -1$ and $\theta > 1/2$;} \\
                      +1, & \hbox{if $a < 1-2\theta$ and $\theta < 1/2$;}\\
                      (1-2\theta)/a, & \hbox{otherwise.}
                    \end{array}
                  \right.
\end{equation}
Note that $\tau_a(\theta)$ is well defined for any $\theta\in[0,1]$ and $a\in[0,1]$.

\subsection{Thresholds for a duo under unanimity rule}
\label{sec:thresholds-duo}

Consider a jury of two jurors. They vote sequentially under unanimity rule:
unless both votes go for $B$, the other state $A$ will be the verdict. Let the two voters have
abilities of $b$ and $c$ in the voting order.
Before the two voters start to vote, the \emph{a priori} probability of $A$ is $\theta$.

It is evident from Section~\ref{sec:threshold} that the threshold of the first voter is $\bar{y} =
\tau_b(\theta)$, as given in \eqref{eqn:honest_threshold}. If he votes for $A$, then the jury
verdict is $A$ and there is no need for the second voter to vote. Otherwise, given the first vote
is for $B$ (which implies that $\bar{y} > -1$), according to Bayes’ Law, the second voter has an
updated \emph{a priori} probability of $A$ as follows:
\[%%begin{equation}\label{eqn:theta-update}
  \bar{\theta} = \frac{\theta F_b(\bar{y})}{\theta F_b(\bar{y})+(1-\theta)G_b(\bar{y})}.
\]%%end{equation}
Consequently, the threshold of the last voter is $\bar{z}=\tau_c(\bar{\theta})$.

\subsection{Thresholds for a triple under majority rule}

We are particularly interested in a jury of three members under majority rule. We determine their honest voting
thresholds. Let us fix their voting order
at $(a,b,c)$. It is immediate from Equation \eqref{eqn:honest_threshold} that the threshold of the first voter
is $x=0$ with his \emph{a priori} probability $\theta=1/2$.

Let us determine the posterior probability $\theta(A)$, given the
prior voting of ability $a$ and threshold $x=0$ is $A$. Then Bayes' Law implies
\begin{equation}\label{eqn:theta(A)}
\theta(A) = \frac{1-F_a(0)}{(1-F_a(0))+(1-G_a(0))} = \frac{2+a}{4}.
\end{equation}
Similarly, if the prior voting is $B$, then the \emph{posteriori} probability is
\begin{equation}\label{eqn:theta(B)}
\theta(B) = \frac{2-a}{4}.
\end{equation}
According to \eqref{eqn:honest_threshold}, the honest threshold $y_A$ for the second voter with prior voting
of $A$ is
\begin{equation}\label{eqn:y_A}
y_A = y_A(a,b)= \left\{
                  \begin{array}{ll}
                    -1, & \hbox{if } b\leq {a}/{2},\\
                    -{a}/(2 b), & \hbox{otherwise}.
                  \end{array}
                \right.
\end{equation}
Symmetrically, we have
\begin{equation}\label{eqn:y_B}
y_B=y_B(a,b)= - y_A(a,b).
\end{equation}

\begin{remark}\label{rmk:herding1}
From \eqref{eqn:y_A} and \eqref{eqn:y_B} we see that if the ability of the second voter is so small
that $b\le a/2$, then he will ignore his own signal and vote the same way as the first voter. This
phenomenon is the so-called \emph{herding}. For this reason, we denote
\begin{equation}\label{eqn:herding1}
  H_2=\left\{(a,b,c)\in [0,1]^3: \;b\le a/2\right\}
\end{equation}
as the set of voting sequences with which the second voter herds. If this happens, then there is no need for the last
voter to vote.
\end{remark}

Now let us calculate the voting thresholds for the third voter, assuming he does vote, which implies that $(a,b)\not\in H_2$.
Bayes update for the probability of $A$ after two votes, first $A$ and then $B$, is given as follows:
\begin{equation}\label{eqn:theta(AB)}
\theta(AB) = \frac{\theta(A)F_b(y_A)}{\theta(A)F_b(y_A)+(1-\theta(A))G_b(y_A)},
\end{equation}
where $\theta(A)$, $\theta(B)$ and $y_A$ are given in \eqref{eqn:theta(A)}, \eqref{eqn:theta(B)} and \eqref{eqn:y_A},
respectively. Therefore, a straightforward calculation gives the threshold $z_{AB}$ of the third voter as follows:
\begin{equation}\label{eqn:z_AB}
z_{AB} = z_{AB}(a,b,c)= \left\{
                  \begin{array}{ll}
                    1, & \hbox{if } c\leq \rho(a,b),\\
                    \rho(a,b)/c, & \hbox{otherwise},
                  \end{array}
                \right.
\end{equation}
where
\begin{equation}\label{eqn:rho}
\rho(a,b)=\frac{2(2b-a)}{8-a^2-2ab},
\end{equation}
which is clearly positive. Symmetrically we have
\begin{equation}\label{eqn:z_BA}
z_{BA}(a,b,c)=-z_{AB}(a,b,c).
\end{equation}

\begin{remark}\label{rmk:herding2}
As in Remark~\ref{rmk:herding1}, from \eqref{eqn:z_AB} and \eqref{eqn:z_BA} we see that if the
ability of the third voter is so small that $c\leq \rho(a,b)$, then he will ignore his own signal
and follow the vote of the previous juror. For this reason, we denote
\begin{equation}\label{eqn:herding2}
  H_3=\left\{(a,b,c)\in [0,1]^3\setminus H_2: \;c\leq \rho(a,b)\right\}
\end{equation}
as the set of voting sequences with which the third voter votes and herds.
\end{remark}

\subsection{An illustrative example}

Let us illustrate how our model works and how the verdict of a jury of fixed abilities
and fixed signals can depend on the voting order. We take an
example with a boffin and two (unequal) yokels, as described in the proof of
Proposition~\ref{pro:main_special}.

Assume we have a juror of ability $a=0.05$ who has signal $s_{a}=-0.5$, a juror
of ability $b=0.1$ with signal $s_b=+0.5$ and a juror of ability $c=0.9$ with
signal $s_{c}=-0.01$. Suppose $A$ and $B$ are equiprobable. Suppose the voting
is in ability order $(c,b,a)$ or $(c,a,b)$. The boffin of ability $c$ begins the voting
and he votes $B$ because his honest threshold is $0$ according to \eqref{eqn:honest_threshold}
and his signal is negative. After seeing this vote, the posterior
probability of $A$ is given by \eqref{eqn:theta(B)} as
\[
\theta^{\prime}=\theta(B)=\frac{2-c}{4}=\frac{2-0.9}{4}=0.275.
\]
For the next voter, of ability $x\in \{a,b\}$, we have $x<1-2\theta^{\prime}=1-2(0.275)=0.45$, so according to
the third line of \eqref{eqn:honest_threshold} his threshold is $+1$ and hence he votes $B$ (herding). The same
reasoning holds for the last juror (though his vote does not affect the verdict). Thus the voting
is $(B,B,B)$, with majority verdict $B$, for voting orders $(c,b,a)$ and $(c,a,b)$.

Now assume the voting order is $(b,c,a)$, which we later show has optimal reliability. The first juror votes $A$, since he has a
positive signal and his honest threshold is $0$ according to \eqref{eqn:honest_threshold}. After his vote we have by
\eqref{eqn:theta(A)} that the posterior probability of $A$ is given by
\[
\theta^{\prime}=\theta(A)=\frac{2+0.1}{4}=0.525.
\]
Since the ability of the boffin, $c=0.9$, is larger than $2\theta^{\prime}-1=2(0.525)-1=0.05$, his honest threshold is given
by the last line of \eqref{eqn:honest_threshold} as $(1-2\theta^{\prime})/0.9=(1-2(0.525))/0.9 < -0.05$ and his
signal is $-0.01$, so he votes $A$. This already determines the majority verdict as $A$. In fact the last voter herds and hence
the voting is $(A,A,A)$. This example shows how it may help the boffin of ability $0.9$ to go after a low ability voter, in
the case that his own signal is close to $0$.

\subsection{Reliability functions}

We will study juries of both general sizes and three in particular. For a jury of general size, we look closely at
the situation where the last two jurors of the jury need to vote.

\subsubsection{Duo under unanimity rule}
\label{sec:reliability-duo}

Let us calculate the reliability under the same setting as in Section~\ref{sec:thresholds-duo} for a two-member jury
under unanimity rule (for $B$). Let $\bar{q}_A(\theta; b, c)$ (resp.\ $\bar{q}_B(\theta; b, c)$) denote the probability of
verdict $A$ (resp.\ $B$) when Nature is $A$ (resp.\ $B$) and let $\bar{Q}(\theta; b, c)$ denote the overall reliability
of the jury verdict, both with voting order of $(b,c)$. Then
it is clear that
\begin{equation}\label{eqn:reliability-duo}
\left\{
   \begin{array}{ll}
\bar{q}_A(\theta; b, c) = (1-F_b(\bar{y})) + F_b(\bar{y})(1-F_c(\bar{z})), &\\
\bar{q}_B(\theta; b, c) = G_b(\bar{y})G_c(\bar{z}), & \\
\bar{Q}(\theta; b, c) = \theta \bar{q}_A(\theta; b, c) + (1-\theta)\bar{q}_B(\theta; b, c), &
   \end{array}
 \right.
\end{equation}
where $\bar{y}$ and $\bar{z}$ are calculated as in Section~\ref{sec:thresholds-duo}.

\subsubsection{Triple under majority rule}

We now evaluate reliability $Q(a,b,c)$, the probability of a correct verdict, where the jurors have
abilities $a,b$ and $c$ (in voting order) and the first voter has honest thresholds $x=0$. Let
$q_{A}$ (resp.\ $q_{B}$) denote the probability of majority verdict $A$ (resp.\ $B$) when Nature is
$A$ (resp.\ $B$). Then for an arbitrary \emph{a priori} probability $\theta_{0}$ of $A$ we have
that the reliability $Q(a,b,c)$ is given by
\[
Q(a,b,c)  =\theta_{0}\, q_{A}(a,b,c)+(1-\theta_{0})\,q_{B}(a,b,c).
\]
Hence with neutral alternatives $\theta_{0}=1/2$, we have
\[
Q(a,b,c) =\frac{1}{2}(q_{A}(a,b,c)+q_{B}(a,b,c)),
\]
and symmetry gives the simpler formula
\begin{equation}\label{eqn:just-A-or-B}
Q(a,b,c)=q_{A}(a,b,c)=q_B(a,b,c).
\end{equation}

The formula for $q_{A}(a,b,c)$ is given by summing up the probabilities of voting patterns $AA$,
$ABA$ and $BAA$ when Nature is $A$. Thus
\begin{align}
q_{A}(a,b,c) &  =(1-F_a(0))(1-F_b(y_A))+(1-F_a(0))F_b(y_A)(1-F_c(z_{AB})))  \nonumber\\
&  \quad+F_a(0)(1-F_b(y_B))(1-F_c(z_{BA}))), \label{eqn:q-A}
\end{align}
with a similar formula for $q_{B}$. Again with some straightforward calculations and recalling our
definitions \eqref{eqn:herding1} and \eqref{eqn:herding2}, we can get\footnote{{We use
\emph{Wolfram Mathematica}, a mathematical symbolic computation program, for all the
straightforward but tedious algebraic calculations onwards. See Appendix~\ref{sec:Mathematica-1}
for more details.}}
\begin{equation}\label{eqn:reliability}
Q(a,b,c)=\left\{
           \begin{array}{ll}
             (2+a)/4, & \hbox{if } (a,b,c)\in H_2; \\
             q_0(a,b), & \hbox{if } (a,b,c)\in H_3; \\
             q(a,b,c), & \hbox{if } (a,b,c)\in S;
           \end{array}
         \right.
\end{equation}
where $S=[0,1]^3\setminus (H_2\cup H_3)$, and
\[
  q_0(a,b) = \frac{a^2+4 b (b+2)}{16 b},
\]
and
\[
  q(a,b,c) = \frac{4 (2 b-a)^3 + 4 (8 - a^2 - 2 a b)(16 b + (a + 2 b)^2) c
+ (2 b-a) (8 - a^2 - 2 a b)^2 c^2}{128 b (8 - a^2 - 2 a b) c}.
\]

\begin{remark}
The ability sequence cube $[0,1]^3$ is partitioned into $H_2$, $H_3$ and $S$ and herding takes
place if and only if the ability sequence falls into the former two subsets. It is easy to check
that, if each juror has an ability more than $1/2$ (i.e., $a,b,c> 1/2$), then $(a,b,c)\in S$ and
hence there is no herding.
\end{remark}

\section{Unanimity element in majority verdict}
\label{sec:unanimity}

Although we are mainly concerned with the reliability of majority verdicts under sequential voting,
the related question of unanimous verdicts will be useful to partly resolve here, as it has an
impact on the analysis of majority verdict theory. {So, we consider sequential voting of $n\ge 2$
jurors and, between the two alternatives for verdict, there is one that is designated in advance as
the \emph{favored}, to which the verdict goes unless the other alternative gets all votes of the
jury.} Such voting rule, which we call \emph{unanimous voting} rule, has been considered in the
organizational studies literature, e.g., \citet{Romme04}. In some courts of law a unanimous vote is
required to convict. Such a requirement has been suggested for capital cases to reduce the number
of false convictions. However, \citet{FePe98} have shown that, for strategic voting (to be
discussed in more detail in Section~\ref{sec:comparison-with-other-schemes}) there is no such
reduction when unanimity is required for conviction. \citet{BeDa16} show how to allocate
heterogeneous ability experts (with binary signals) to disjoint juries which adopt the unanimity
rule. Most of the study of unanimous voting is done in a simultaneous rather than sequential manner
discussed here.

Here is an example where sequential unanimous ``voting'' takes place. Suppose a patient considers
having a major operation. It is important enough to get a second opinion. The patient will have the
operation carried out only if both doctors recommend it. The second doctor to give an opinion will
know the recommendation of the first doctor and will know his reputation as well. In what order
should the patient query the two doctors (in terms of their abilities) to have the most reliable
verdict?

\subsection{Optimal {unanimous} voting order for a duo}

The following lemma says that the aforementioned patient should go to the more able doctor first.

\begin{lemma}\label{lem:unanimous-voting}
{Consider a unanimous voting problem with a two-member jury of fixed abilities and any a priori
probability distribution over the two alternatives. The probability of the jury verdict being
correct is maximized when voting is in seniority order, with the abler juror voting first.}
\end{lemma}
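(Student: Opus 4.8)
The plan is to fix the two abilities and compare the two admissible orders directly, so that the order $(b,c)$ means that ability $b$ votes first; I want to show that if $b>c$ then $\bar{Q}(\theta;b,c)\ge \bar{Q}(\theta;c,b)$, with equality only when $b=c$, which is exactly the assertion that seniority order is optimal. The starting point is the reliability formula \eqref{eqn:reliability-duo} together with the thresholds of Section~\ref{sec:thresholds-duo}. First I would simplify $\bar{q}_A$: since $(1-F_b(\bar{y}))+F_b(\bar{y})(1-F_c(\bar{z}))=1-F_b(\bar{y})F_c(\bar{z})$, the reliability collapses to $\bar{Q}(\theta;b,c)=\theta(1-F_b(\bar{y})F_c(\bar{z}))+(1-\theta)G_b(\bar{y})G_c(\bar{z})$, where $\bar{y}=\tau_b(\theta)$ and $\bar{z}=\tau_c(\bar{\theta})$ as in Section~\ref{sec:thresholds-duo}.

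A structural observation makes the comparison transparent and is worth recording before grinding the algebra. Under the unanimity rule the only branch that discards information is the one on which the first voter votes for the favored alternative $A$, since the verdict is then committed to $A$ without consulting the second juror; on the complementary branch, where the first votes $B$, the honest second juror makes precisely the Bayes-optimal binary decision given the coarse information $s_b<\bar{y}$ and his own signal $s_c$. Hence the reliability splits as $\bar{Q}(\theta;b,c)=\theta(1-F_b(\bar{y}))+\int_{-1}^{1}\max(\alpha f_c(s),\beta g_c(s))\,ds$, the integral being the Bayes value of the residual two-hypothesis problem faced by the second juror with unnormalised priors $\alpha=\theta F_b(\bar{y})$ and $\beta=(1-\theta)G_b(\bar{y})$ (the honest threshold $\bar{z}$ is exactly the Bayes threshold $\alpha f_c=\beta g_c$). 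Writing $\max(x,y)=\tfrac12(x+y+|x-y|)$ and exploiting the clean identities $F_a(s)+G_a(s)=s+1$ and $F_a(s)-G_a(s)=\tfrac12 a(s^2-1)$ coming from \eqref{eqn:F&G}, together with the non-herding threshold $\bar{y}=(1-2\theta)/b$ from \eqref{eqn:honest_threshold}, reduces the integral to that of a linear function over $[-1,1]$ and yields $\bar{Q}(\theta;b,c)$ in closed form.

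With these closed forms in hand, the final step is to form $\bar{Q}(\theta;b,c)-\bar{Q}(\theta;c,b)$ and show it carries the sign of $b-c$. As a consistency check, at $\theta=1/2$ the computation collapses to $\bar{Q}(1/2;b,c)-\bar{Q}(1/2;c,b)=\frac{(b-c)(b^2+bc+c^2)}{32bc}$, positive exactly when $b>c$, and I expect the general-$\theta$ difference to factor likewise as $(b-c)$ times a manifestly positive rational expression. The main obstacle is twofold. First, the algebra for general $\theta$ is heavy and the positivity of the second factor must be verified uniformly in $\theta\in[0,1]$ and $b,c\in[0,1]$; I would delegate the symbolic factorisation to the computer algebra already used elsewhere in the paper. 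Second, and more delicate, is the herding case analysis: when a threshold hits $\pm1$, one of $F_c,G_c$ (or the first-voter quantities) must be replaced by its clamped value and the closed form changes, so the difference must be re-examined on each herding region. For instance, when $b>2c$ the abler-first order $(b,c)$ forces the weak second juror to herd and reliability equals the single-voter value $(2+b)/4$, whereas in $(c,b)$ the able juror votes last and does not herd; one checks directly that $(2+b)/4$ still dominates. The task is to confirm that $\bar{Q}(\theta;b,c)\ge\bar{Q}(\theta;c,b)$ survives on every such region and matches continuously across the region boundaries, after which the lemma follows.
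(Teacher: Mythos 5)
Your proposal is correct and follows essentially the same route as the paper: both compute the closed-form reliability from \eqref{eqn:reliability-duo} with the honest thresholds of Section~\ref{sec:thresholds-duo}, split into the herding and non-herding regions, and then verify that $\bar{Q}(\theta;b,c)-\bar{Q}(\theta;c,b)$ has the sign of $b-c$ by a (computer-assisted) symbolic positivity check over $\theta,b,c\in[0,1]$ --- the paper simply hands the whole difference to \emph{Mathematica}'s \texttt{FullSimplify}/\texttt{FindInstance}, with details in the supplementary notebook. Your Bayes-value reformulation of the second juror's branch and the explicit $\theta=1/2$ factorization $\frac{(b-c)(b^2+bc+c^2)}{32bc}$ (which checks out) are nice organizing touches but do not change the substance of the argument.
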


\begin{proof}
See Appendix~\ref{sec:proof-last-two}.
\end{proof}

\subsection{Application to any jury of general size}

Let us apply Lemma~\ref{lem:unanimous-voting} to a jury of any size, whether under the unanimity
rule or majority rule.

\begin{theorem}\label{thm:last-two}
Consider a jury of an odd size with fixed abilities and honest voting for a majority verdict. {
Among the voting orders that maximize the reliability of jury majority verdict, there is one in
which the order of the last two jurors is in seniority.}
\end{theorem}

\begin{proof}
{Let $\bm{a}=(a_1, \ldots, a_{n-2}, a_{n-1}, a_n)$ be any voting order with $a_{n-1}< a_n$. We show
that the reliability of voting order $\bm{a}'=(a_1, \ldots, a_{n-2}, a_{n}, a_{n-1})$ is at least
as high as that of $\bm{a}$. There are two cases when voting in order $\bm{a}$ takes place
depending on whether or not the majority verdict is already decided before the juror in position
$n-1$ has the chance to vote. In the former case, the voting order of the last two jurors does not
affect the verdict. In the latter case, of the first $n-2$ votes already cast, one of the two
alternatives, say $A$, receives exactly one more vote than the other, which implies that the
original problem reduces its verdict to that of the unanimous voting problem of the last two jurors
with $A$ as the favored alternative. According to Lemma~\ref{lem:unanimous-voting}, the reliability
of the unanimous voting problem is maximized when the voting order is $(a_{n}, a_{n-1})$. }
\end{proof}

We note that the above result of optimal order also holds for the relative abilities of the last
two jurors in unanimous voting of $n$ jurors, for any $n\geq2$, without assuming that $n$ is odd.
Similar arguments, but with different conclusions, were given in \citet{AlCh17a} for strategic
voting.

\section{Comparisons of voting orders}

In this section we first consider how reliability changes when two jurors switch their voting positions
(Theorem~\ref{thm:last-two}, Propositions~\ref{pro:1st-vs-2nd} and \ref{pro:1st-vs-3rd}) and we then combine
these results to determine the optimal voting order of three jurors (Theorem~\ref{thm:optimal-sequence}).

\subsection{Order of the first two voters}
\label{sub-sec:first-two}

The following proposition states that, fixing any last voter, starting with a lower-ability voter
always has a higher reliability.

\begin{proposition}\label{pro:1st-vs-2nd}
Under sequential-majority scheme suppose that $A$ and $B$ are equiprobable and we have three honest jurors
of abilities $a,b,c\in [0,1]$. If $a < b$, then $Q(a,b,c) > Q(b,a,c)$.
\end{proposition}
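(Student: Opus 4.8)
The plan is to work directly from the closed-form reliability \eqref{eqn:reliability}, computing $Q(a,b,c)-Q(b,a,c)$ and showing it is positive whenever $a<b$. The key structural observation is that the hypothesis $a<b$ forces $b>a/2$, so the ordering $(a,b,c)$ can never lie in the second-voter herding set $H_2$; hence $Q(a,b,c)=q_0(a,b)$ when $c\le\rho(a,b)$ (that is, $(a,b,c)\in H_3$) and $Q(a,b,c)=q(a,b,c)$ when $c>\rho(a,b)$. The swapped ordering $(b,a,c)$, by contrast, may fall into any of $H_2$, $H_3$ or $S$: it lies in $H_2$ precisely when $a\le b/2$, where $Q(b,a,c)=(2+b)/4$, and otherwise the governing threshold is $\rho(b,a)$.

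Before splitting into cases I would fix the ordering of the two herding thresholds from \eqref{eqn:rho}. A short computation gives
\[
\rho(a,b)-\rho(b,a)=\frac{2(b-a)\bigl(24-2a^{2}-7ab-2b^{2}\bigr)}{(8-a^{2}-2ab)(8-b^{2}-2ab)},
\]
and since $a,b\in[0,1]$ the bracketed factor exceeds $24-11>0$ while each denominator exceeds $5$; thus $a<b$ implies $\rho(a,b)>\rho(b,a)$ (with $\rho(b,a)\le 0\iff a\le b/2$, consistent with $(b,a,c)\in H_2$). This means that, as $c$ grows, the swapped sequence leaves $H_3$ for $S$ strictly before the original sequence does, which cleanly organizes the entire comparison.

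I would then partition according to where $c$ sits relative to $\rho(b,a)$ and $\rho(a,b)$. When $a\le b/2$ there are two cases, comparing $(2+b)/4$ against $q_0(a,b)$ (for $c\le\rho(a,b)$) and against $q(a,b,c)$ (for $c>\rho(a,b)$); the first already reduces to the transparent identity $q_0(a,b)-(2+b)/4=a^{2}/(16b)\ge 0$. When $a>b/2$ there are three cases: $c\le\rho(b,a)$ compares $q_0(a,b)$ with $q_0(b,a)$; $\rho(b,a)<c\le\rho(a,b)$ compares $q_0(a,b)$ with $q(b,a,c)$; and $c>\rho(a,b)$ compares $q(a,b,c)$ with $q(b,a,c)$. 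In each case I would clear the manifestly positive denominators ($16b$, $128bc(8-a^{2}-2ab)$, and their analogues) and certify the sign of the resulting polynomial on the prescribed region, using the continuity of \eqref{eqn:reliability} at the herding thresholds (e.g. $q(a,b,\rho(a,b))=q_0(a,b)$) to propagate strictness across the shared boundaries.

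The main obstacle is the final case, the $S$-versus-$S$ comparison. There $q(a,b,c)$ and $q(b,a,c)$ are degree-two rational functions of $c$ carrying the two distinct denominators $8-a^{2}-2ab$ and $8-b^{2}-2ab$, so clearing denominators leaves a genuinely three-variable polynomial inequality that must be certified nonnegative on the semialgebraic region $\{a>b/2,\ c>\rho(a,b)\}$ rather than factoring by inspection; this is where the symbolic computation does the heavy lifting. The mixed cases (one ordering herding, or half-herding, while the other does not) are conceptually the subtler ones, since the two reliabilities are then given by different functional forms and the region constraints must be tracked carefully, but they are algebraically much lighter than the $S$-versus-$S$ case.
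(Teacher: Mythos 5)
Your proposal follows essentially the same route as the paper's proof: the same closed-form reliability \eqref{eqn:reliability}, the same five-case decomposition according to which of $H_2$, $H_3$, $S$ each of $(a,b,c)$ and $(b,a,c)$ falls into (you merely organize the cases by $a\lessgtr b/2$ first rather than by $c\lessgtr\rho(a,b)$ first, and make explicit the ordering $\rho(a,b)>\rho(b,a)$ that the paper asserts as ``easy to check''), and the same reliance on symbolic certification of the resulting polynomial inequalities for the hard $S$-versus-$S$ and mixed cases. The argument is correct and complete at the same level of detail as the paper's.
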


\begin{proof}
According to \eqref{eqn:reliability}, the first case there does not apply to $Q(a,b,c)$. We consider $c\le\rho(a,b)$ here
and leave the other case of $c >\rho(a,b)$ to Part 2 of Appendix~\ref{sec:1st-vs-2nd}, as the proof is very similar.
Let $\Delta_1(a,b,c)\equiv Q(a,b,c)- Q(b,a,c)$. If $(b,a)\in H_2$, then
\[
\Delta_1(a,b,c) = q_0(a,b)-\frac{b+2}{4}=\frac{a^2}{16 b} > 0.
\]
If $(b,a,c)\in H_3$, then
\[
  \Delta_1(a,b,c)  = q_0(a,b)-q_0(b,a)=\frac{(b-a)(3 ab -a^2 -b^2)}{16 ab} > 0,
\]
since $3ab-a^{2}-b^{2}$ is clearly increasing in $a$ and hence it is more than the value when $a$ is replaced by
$b/2$. If $(b,a,c)\in S$, then
\[
\Delta_1(a,b,c)  =q_0(a,b)-q(b,a,c) = \frac{f_1(a,b,c)}{128 a b c (8-b^2-2 a b)},
\]
where
\begin{align*}
f_1(a,b,c) =\; & -4 (2 a - b)^3 b + 4 (8 - 2 a b - b^2) (2 a^3 - 4 a^2 b \\
               & + 4 a b^2 - b^3) c -  b (2 a - b) (8 - 2 a b - b^2)^2 c^2.
\end{align*}
In Part 1 of Appendix~\ref{sec:1st-vs-2nd} we prove $f_1(a,b,c)>0$ subject to
$(b,a,c)\in S$, $(a,b,c)\in H_3$ and $a<b$ (i.e., $\rho(b,a)<c
\leq \rho(a,b)$, $b/2 <a <b$ and $a,b,c\in [0,1]$).
\end{proof}

\subsection{Order of the two end-voters}

Our next proposition establishes that there is a better voting order applicable to all possible pairs of abilities
of the two end-voters in any three-member jury, regardless of what ability the middle voter is.

\begin{proposition}\label{pro:1st-vs-3rd}
Under sequential-majority scheme suppose that $A$ and $B$ are equiprobable and we have three honest jurors
of abilities $a,b,c\in [0,1]$. If $a < b$, then $Q(b,c,a) > Q(a,c,b)$ for any $c$.
\end{proposition}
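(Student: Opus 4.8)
The plan is to mirror the proof of Proposition~\ref{pro:1st-vs-2nd}: write the quantity of interest as a difference of the piecewise reliability function \eqref{eqn:reliability} evaluated at the two orderings, partition the ability cube by which herding region each ordering occupies, and check positivity region by region, deferring the single genuinely hard polynomial inequality to the appendix. Accordingly I set $\Delta(a,b,c) = Q(b,c,a) - Q(a,c,b)$. The organizing observation is that both orderings place the ability-$c$ juror second, so by \eqref{eqn:herding1} the second voter herds exactly when his ability falls below half that of the \emph{first} voter: the middle juror herds in $(b,c,a)$ iff $c \le b/2$, and in $(a,c,b)$ iff $c \le a/2$. Since $a < b$, this splits the problem into the three nested regimes $c \le a/2$, $a/2 < c \le b/2$, and $c > b/2$.

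First I would dispatch the regime $c \le a/2$, where both orderings lie in $H_2$: by \eqref{eqn:reliability} the two reliabilities are $(2+b)/4$ and $(2+a)/4$, so $\Delta = (b-a)/4 > 0$ at once. In the middle regime $a/2 < c \le b/2$ only $(b,c,a)$ herds at its second voter, giving $Q(b,c,a) = (2+b)/4$ while $(a,c,b) \in H_3 \cup S$. When $(a,c,b) \in H_3$ one finds
\[
\Delta(a,b,c) = \frac{2+b}{4} - q_0(a,c) = \frac{4c(b-c) - a^2}{16c},
\]
which is positive because $c \le b/2$ forces $b - c \ge c$ and $c > a/2$ forces $4c^2 > a^2$. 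The leftover configurations---$(a,c,b) \in S$ in this middle regime, together with the whole regime $c > b/2$, in which neither middle voter herds and each ordering independently falls into $H_3$ or $S$---all reduce $\Delta$ to a difference of the explicit expressions appearing in \eqref{eqn:reliability}. In every such subcase $\Delta$ is a rational function whose denominator is manifestly positive on the region at hand (the factors of the form $8 - x^2 - 2xy$ are at least $5$ for $x,y \in [0,1]$, and the remaining factors are strictly positive abilities by the herding constraints), so the inequality reduces to positivity of a numerator polynomial in $(a,b,c)$ subject to the constraints from $c > b/2$, the relevant $\rho$-thresholds of \eqref{eqn:rho}, and $a < b$.

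The crux is the fully non-herding subcase $(b,c,a), (a,c,b) \in S$, where $\Delta = q(b,c,a) - q(a,c,b)$. Here the two copies of $q$ have different denominators, $128\,ac\,(8 - b^2 - 2bc)$ and $128\,bc\,(8 - a^2 - 2ac)$; clearing these (both positive on the region) turns the claim into a single polynomial inequality on the set cut out by $c > b/2$, $a > \rho(b,c)$, $b > \rho(a,c)$, and $a < b$. Since $\Delta$ vanishes identically when $a = b$ (the two orderings then coincide), the numerator is divisible by $(b-a)$, and the task becomes showing that the cofactor is positive on the constrained region. Carrying out this factorization and sign analysis---with the symbolic bookkeeping done in \emph{Mathematica}, as elsewhere in the paper---is the main obstacle, and it is what I would relegate to an appendix, exactly as the inequality $f_1 > 0$ is handled for Proposition~\ref{pro:1st-vs-2nd}.

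Finally, I would note why I do not attempt the tempting shortcut of deducing this proposition from Proposition~\ref{pro:1st-vs-2nd} and Theorem~\ref{thm:last-two}: swapping the two end voters is a transposition of positions $1$ and $3$, which factors only into \emph{three} adjacent swaps, and the direction of the reliability comparison at each adjacent swap depends on where $c$ sits relative to $a$ and $b$. These directions need not align, so no such chaining yields the stated strict inequality for all $c$, and a direct computation seems unavoidable.
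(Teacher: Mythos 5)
Your proposal is correct and follows essentially the same route as the paper's appendix proof: partition by the herding regions $H_2$, $H_3$, $S$ for each of the two orderings, dispatch the easy cases by hand (the $(b-a)/4$ and $(b^2-a^2)/(16c)$ differences), and reduce the remaining cases to positivity of explicit numerator polynomials over positive denominators, verified symbolically, with the factor $(b-a)$ extracted in the fully non-herding case. The only (harmless) difference is that your subcase $(a,c,b)\in H_3$ in the regime $a/2<c\le b/2$ is vacuous --- the paper notes that $c\le b/2$ forces $b>\rho(a,c)$, so $(a,c,b)\in S$ there.
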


\begin{proof}
See Appendix~\ref{sec:1st-vs-3rd}.
\end{proof}

\subsection{Optimal voting order}

Combining the three pairwise comparisons of Theorem~\ref{thm:last-two}, Propositions~\ref{pro:1st-vs-2nd} and
\ref{pro:1st-vs-3rd}, we obtain the following main result.

\begin{theorem}\label{thm:optimal-sequence}
Given equiprobable two states of Nature, for any jury of distinct abilities  $0<a<b<c \le 1$, the
unique voting order that maximizes the reliability of the verdict for honest sequential majority
voting is given by $(b,c,a)$.
\end{theorem}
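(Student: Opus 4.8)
The plan is to assemble Theorem~\ref{thm:optimal-sequence} purely by combining the three pairwise comparisons already established, treating them as strict inequalities between the reliabilities of the six possible orderings of $\{a,b,c\}$. There are $3!=6$ orderings, and the goal is to show $(b,c,a)$ strictly dominates all five others. I would first record what each prior result gives us. Theorem~\ref{thm:last-two} tells us that among reliability-maximizing orders the last two voters are in seniority (lower ability last); applied to three jurors this says that for any fixed first voter, placing the two remaining jurors with the more able one in position two beats the reverse. Proposition~\ref{pro:1st-vs-2nd} says that swapping the first two voters to put the less able one first strictly raises reliability (for $a<b$, $Q(a,b,c)>Q(b,a,c)$). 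Proposition~\ref{pro:1st-vs-3rd} says that between the two ``end-swap'' orders with a common middle voter, the one starting with the \emph{more} able end-voter wins: $Q(b,c,a)>Q(a,c,b)$ for $a<b$.

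The key combinatorial step is to use these three facts to knock out every competitor of $(b,c,a)$. First I would reduce from six orderings to three using the last-two-seniority principle: the three orderings in which the last two voters are \emph{not} in seniority order are each dominated (weakly, via Theorem~\ref{thm:last-two}, but I will want strictness, so I may instead pair each such order directly against its partner). Concretely, $(a,b,c)$ is beaten by $(a,c,b)$, $(b,a,c)$ is beaten by $(b,c,a)$, and $(c,a,b)$ is beaten by $(c,b,a)$, since in each case the latter puts the larger of the last two abilities in the second position. This leaves $(b,c,a)$, $(a,c,b)$, and $(c,b,a)$ as the only candidates for the optimum. Then Proposition~\ref{pro:1st-vs-3rd} directly gives $Q(b,c,a)>Q(a,c,b)$, eliminating $(a,c,b)$. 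Finally I must eliminate $(c,b,a)$ in favor of $(b,c,a)$: since $b<c$, Proposition~\ref{pro:1st-vs-2nd} applied with first-two abilities $b$ and $c$ gives $Q(b,c,a)>Q(c,b,a)$, eliminating the last competitor. This chain establishes that $(b,c,a)$ strictly exceeds all five alternatives, proving both optimality and uniqueness.

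The step I would watch most carefully is the strictness of the last-two-seniority reductions. Theorem~\ref{thm:last-two} as stated only guarantees a \emph{weak} improvement (``at least as high''), because its proof splits into a case where the verdict is already decided — in which the last two positions are irrelevant and reliabilities are \emph{equal}. So I cannot directly deduce strict elimination of, say, $(a,b,c)$ from $(a,c,b)$ using that theorem alone. The clean fix is to route every elimination through the two strict Propositions rather than through Theorem~\ref{thm:last-two}: for the three non-seniority orders I pair each against a surviving candidate using Proposition~\ref{pro:1st-vs-2nd} or~\ref{pro:1st-vs-3rd} where possible. In fact the argument above already does this — $(b,a,c)$ is dispatched by Proposition~\ref{pro:1st-vs-2nd} (comparing first two voters with common last voter $c$), and $(c,a,b)$ versus $(c,b,a)$ is likewise Proposition~\ref{pro:1st-vs-2nd} with common last voter $b$ — so the only residual use of Theorem~\ref{thm:last-two} is for the single comparison whose two orders genuinely differ only in their last two positions, and there I will need to confirm strictness holds for generic (distinct) abilities.

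Thus the proof is essentially a transitivity/tournament argument: list the six orders, orient each pairwise comparison by the appropriate prior result, and verify that $(b,c,a)$ is the unique source of the resulting tournament. The only real obstacle is bookkeeping — making sure each of the five competitors is beaten by a \emph{strict} inequality drawn from Proposition~\ref{pro:1st-vs-2nd} or~\ref{pro:1st-vs-3rd}, since those are the two results that deliver strictness, while Theorem~\ref{thm:last-two} must be invoked only where its weak conclusion suffices or where strictness can be independently confirmed. No new computation is needed beyond what the three cited results already provide.
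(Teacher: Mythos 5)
Your proposal is correct and matches the paper's proof exactly: the paper derives Theorem~\ref{thm:optimal-sequence} by simply combining Theorem~\ref{thm:last-two} with Propositions~\ref{pro:1st-vs-2nd} and~\ref{pro:1st-vs-3rd} in the tournament fashion you describe, and your handling of the weak-versus-strict issue (chaining the weak last-two swap with a strict Proposition) is the right way to secure uniqueness. One small slip: the pair $(c,a,b)$ versus $(c,b,a)$ is a last-two swap, not a first-two swap, so Proposition~\ref{pro:1st-vs-2nd} does not apply to it directly --- but the comparison you actually need, $Q(a,c,b)>Q(c,a,b)$ with common last voter $b$, is exactly the one you name, so the argument stands.
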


\subsection{Do abler juries give better verdicts?}

Suppose we increase the abilities of jurors, keeping the same voting order. Does this increase the
reliability of their verdict? Not necessarily. The jury $(1/4,1/8,1/2)$ is abler than the jury
$(0,1/8,1/2)$, but its reliability $Q(1/4,1/8,1/2)=9/16\approx 0.56$ is lower than
$Q(0,1/8,1/2)={593}/{1024}\approx 0.58$. This is because in the former jury the second voter is
more likely to copy the vote of the first voter, with the consequence that the vote of the most
able final voter might not count at all. {However, when voting in the optimal order, abler juries
do indeed have higher reliability. In fact, we have the following stronger monotonicity result.}

\begin{theorem}\label{thm:reliability-vs-ability}
The reliability $Q(a,b,c)$ of the majority verdict for honest sequential voting in ability sequence $(a,b,c)$
is non-decreasing in both $b,c\in[0,1]$ and in $a\in [c/2,1]$.
\end{theorem}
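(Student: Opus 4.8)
The plan is to exploit the piecewise description of $Q$ in \eqref{eqn:reliability}, together with the fact that $Q$ is continuous across the boundaries separating $H_2$, $H_3$ and $S$. First I would record the boundary identities
\[
q\bigl(a,b,\rho(a,b)\bigr) = q_0(a,b), \qquad q_0(a,a/2) = q(a,a/2,c) = \tfrac{2+a}{4},
\]
which are short direct substitutions using \eqref{eqn:rho} and show that the three pieces of \eqref{eqn:reliability} agree on their common boundaries, so $Q$ is continuous on $[0,1]^3$. Granting continuity, to prove that $Q$ is non-decreasing along a coordinate line it suffices to show that its partial derivative in that coordinate is non-negative on the interior of each piece: a continuous, piecewise-$C^1$ function with a.e.\ non-negative derivative is non-decreasing, so I need not track how often a line crosses the region boundaries.

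Monotonicity in $c$ is the clean case and I would do it first. On $H_2$ and $H_3$ the values $(2+a)/4$ and $q_0(a,b)$ are independent of $c$, so the derivative vanishes. On $S$, writing $D = 8 - a^2 - 2ab > 0$ and regrouping \eqref{eqn:reliability} gives $q = \alpha/c + \beta + \gamma c$ with $\alpha = (2b-a)^3/(32bD)$ and $\gamma = (2b-a)D/(128b)$, whence
\[
\frac{\partial q}{\partial c} = \gamma - \frac{\alpha}{c^2} = \frac{2b-a}{128\,b\,D\,c^2}\bigl(D^2 c^2 - 4(2b-a)^2\bigr).
\]
On $S$ one has $b > a/2$ (so $2b-a>0$) and $c > \rho(a,b) = 2(2b-a)/D$, i.e.\ $Dc > 2(2b-a)$; the factorization $D^2c^2 - 4(2b-a)^2 = (Dc - 2(2b-a))(Dc + 2(2b-a))$ then exhibits two positive factors, so $\partial q/\partial c > 0$. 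This settles monotonicity in $c$.

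For monotonicity in $b$ I would again dispatch the easy pieces first: $(2+a)/4$ is constant in $b$, and $\partial q_0/\partial b = \tfrac14 - a^2/(16 b^2) \ge 0$ precisely because $b > a/2$ throughout $H_3$. Likewise, for monotonicity in $a$ on $[c/2,1]$ the pieces $(2+a)/4$ (derivative $\tfrac14$) and $q_0$ (derivative $a/(8b)\ge 0$) are immediately increasing with no constraint on $a$ needed, so in both directions the entire remaining content is a single sign fact on $S$: $\partial q/\partial b \ge 0$ on $S$, and $\partial q/\partial a \ge 0$ on $S \cap \{a \ge c/2\}$.

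These two inequalities on $S$ are the real obstacle, and I expect to handle them as polynomial inequalities using the symbolic computation already invoked for \eqref{eqn:reliability}: clear the positive denominator of $\partial q/\partial b$ (resp.\ $\partial q/\partial a$) and show the resulting numerator polynomial is non-negative on the semi-algebraic set cutting out $S$, namely $0\le a$, $a/2 < b$, $\rho(a,b) < c \le 1$, with the extra constraint $a \ge c/2$ for the $a$-derivative. A natural reduction is to substitute $c = \rho(a,b) + t$ with $t \ge 0$ and set $u = 2b - a > 0$, turning each defining constraint into a sign condition and aiming to display each numerator as a sum of manifestly non-negative terms. The restriction $a \ge c/2$ is genuinely necessary for the $a$-direction: the jury $(1/4,1/8,1/2)$ lies on the boundary $a = c/2$ and is \emph{less} reliable than $(0,1/8,1/2)\in S$, so $\partial q/\partial a$ is negative for $a < c/2$, which is exactly why the theorem confines $a$ to $[c/2,1]$. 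Once these two sign facts are verified, continuity together with the per-piece derivative bounds yields the claimed monotonicity in all three arguments.
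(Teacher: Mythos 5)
Your proposal follows essentially the same route as the paper's proof: establish continuity of $Q$ across the boundaries of $H_2$, $H_3$ and $S$, then verify that each piece has the appropriate non-negative partial derivatives, delegating the two genuinely hard sign conditions ($\partial q/\partial b\ge 0$ on $S$ and $\partial q/\partial a\ge 0$ on $S\cap\{a\ge c/2\}$) to symbolic polynomial verification exactly as the paper does via Mathematica. Your explicit factorization of $\partial q/\partial c$ on $S$ is a correct, self-contained replacement for the part the paper either machine-checks or argues via its single-juror reliability computation, so the argument is sound and matches the paper's structure.
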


\begin{proof}
We present a complete proof in Appendix~\ref{sec:proof-of-monotonicity}.
However, for monotonicity in $c$, the ability of the last juror, we would like
to present an additional proof here for obtaining more intuition, which is applicable for
\emph{any} sized jury. If the verdict has already been
decided, or if he herds and votes without looking at his signal, the
reliability will not depend on his ability. But with a positive probability,
neither of these conditions applies. In this case he is essentially a jury of
one, for some a priori probability $\theta$ of state $c$.

When the third voter comes to vote (after first two have split), it is the
same as if he is a jury of one, for some \emph{a priori} $\theta$, which we may
assume without loss of generality is at least $1/2$. In this case his honest
threshold is given by \eqref{eqn:honest_threshold}:
\[
\tau_c(\theta)  =\frac{1-2\theta}{c},\text{ \thinspace where
}2\theta-1 < c.
\]

The last juror is correct if $s\geq\tau$ and nature is $A$ or $s\leq\tau$ and
nature is $B$. So the reliability of the single juror (and hence of the
verdict) is given by
\[
\theta(1-F_c(\tau_c(\theta))) +(1-\theta)G_c(\tau_c(\theta))
=\frac{1}{4c}( c^{2}+2c+4\theta^{2}-4\theta+1).
\]
To see that this expression is increasing in $c$, note that its derivative
with respect to $c$ is
\[
\frac{c^{2}-4\theta^{2}+4\theta-1}{4c^{2}}=\frac{(c-2\theta+1)
(c+2\theta-1)}{4c^{2}},
\]
which is positive for $c > 2\theta-1$ (the case we are assuming, where he
does not heard).
\end{proof}

We note that if the jury is in the optimal order established above in Theorem~\ref{thm:optimal-sequence},
then we have $a\ge c$ and hence $a\ge c/2$, so we have the following.

\begin{corollary}
Given any three-member jury, the reliability of honest majority verdict is non-decreasing in the ability of
each juror under optimal voting sequence given in Theorem~\ref{thm:optimal-sequence}.
\end{corollary}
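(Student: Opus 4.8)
The plan is to obtain the corollary directly from Theorem~\ref{thm:reliability-vs-ability}; the entire content is matching voting positions to the slots of the reliability function. First I would fix a three-member jury with abilities $0<a<b<c\le 1$, so that by Theorem~\ref{thm:optimal-sequence} the optimal order is $(b,c,a)$: the median votes first, the highest second, and the lowest third. To avoid a clash between these rank labels and the positional arguments of the reliability function, I would temporarily write the reliability as $Q(u,v,w)$, where $u,v,w$ are the abilities of the first, second and third voters, so that the quantity of interest is $Q(b,c,a)$, i.e.\ $(u,v,w)=(b,c,a)$.

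Then I would apply the three monotonicity claims of Theorem~\ref{thm:reliability-vs-ability} slot by slot. That theorem gives that $Q(u,v,w)$ is non-decreasing in $v$ and in $w$ for all values in $[0,1]$, and non-decreasing in $u$ provided $u\ge w/2$. The two unconditional claims immediately handle the second voter (ability $v=c$, the highest) and the third voter (ability $w=a$, the lowest): raising either juror's ability cannot lower reliability. It remains only to treat the first voter.

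For the first voter I would verify the side condition $u\ge w/2$. In $Q(b,c,a)$ we have $u=b$ and $w=a$, and because the first voter is the median while the last is the lowest, $b>a\ge a/2$; thus the condition holds and $Q(b,c,a)$ is also non-decreasing in $b$. All three positions are therefore covered, and since the monotonicity ranges in Theorem~\ref{thm:reliability-vs-ability} are closed intervals, the conclusion persists even when raising an ability creates a tie with another juror. Hence reliability is non-decreasing in every juror's ability under the optimal order.

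I expect no genuine obstacle, as the analytic work is entirely contained in Theorem~\ref{thm:reliability-vs-ability}; the one thing to get right is the observation that it is precisely the decision to place the median (rather than the lowest) juror first that makes the first-slot condition $u\ge w/2$ hold, thereby unlocking the conditional monotonicity in the first voter that would otherwise be unavailable.
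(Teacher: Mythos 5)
Your proposal is correct and is essentially the paper's own argument: the paper derives the corollary in one line from Theorem~\ref{thm:reliability-vs-ability} by observing that in the optimal order the first voter's ability (the median) exceeds the last voter's ability (the lowest), so the side condition for monotonicity in the first slot is satisfied, while monotonicity in the second and third slots is unconditional. Your explicit relabelling of positional versus rank notation and the check $b>a\ge a/2$ match the paper's reasoning exactly.
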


We conclude by giving two more examples of abler juries with worse reliability. To show that this
can happen with $b>c$ (closer to our optimal order), note that $Q(0,24/25,15/16) \approx 0.76791$,
while $Q(1/100,24/25,15/16) \approx 0.76787$. To show that the loss in reliability can be large,
note that $Q(0,1/20,9/10)  \approx 0.619$, while $Q(1/10,1/20,9/10) =0.525$, more than 15\% loss of
reliability. This is a more extreme version of our first example, where increasing the ability of
the first voter increases the chances that the able last voter will vote after the verdict has been
decided.

\section{Seniority vs.~anti-seniority voting orders}
\label{sec:SO-vs-AO}

While little attention has been paid (until recently) on determining the
optimal voting order, there has been much discussion on the relative
performance (reliability) of seniority order (SO), in decreasing order of
seniority or ability, with respect to anti-seniority order (AO), where voting
is in increasing order. \citet{OtSo01} give an extensive survey
of the practice and theory of SO vs.\ AO. They point out that the AO practiced in the
ancient Sanhedrin and until recently for voting in the United States Supreme
Court. In tennis and badminton, the referee, usually a more senior figure, votes after the
lines-persons. It can be argued that Hawkeye, the computer arbiter of close
calls, who is the final arbiter, is the most able. Sometimes, both SO and AO
are used. \citet{HaBr92} note that in courts where opinions are assigned
at the discretion of the Chief Justice, conference discussion is SO and voting
is AO. Appellate decisions in courts usually have more senior (equals more
able?) judges in the higher courts, thus AO. Qualitatively, it would seem that
AO is less susceptible to herding. On the other hand, since the most able
jurors vote late, the majority decision may have already been confirmed before
they are reached. So while conventional opinion may be on the side of AO, we
find (at least for three-member juries) that it is dominated by SO.

To see this property of three-member juries, we look back at an earlier
result. If the jury consists of three jurors of abilities $a_1<a_2<a_3$, then
seniority order $(a_3,a_2,a_1)$ can be obtained from anti-seniority
order $(a_1,a_2,a_3)$ by simply transposing the first and last juror.
We considered the effect of such a transposition in Proposition~\ref{pro:1st-vs-3rd}, where we
showed that $Q(b,c,a)>Q(a,c,b)$ for any $c$ if $a<b$. By taking $a_3=b$, $a_1=a$ and
$a_2=c$, we obtain the following dominance of seniority sequence over
anti-seniority sequence.

\begin{theorem}\label{thm:seniority-sequence}
In honest sequential majority voting of three jurors of distinct abilities, the seniority sequence
has a strictly higher reliability of the verdict than the anti-seniority sequence.
\end{theorem}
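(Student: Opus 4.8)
The plan is to recognize that, for three jurors of distinct abilities, the seniority and anti-seniority orderings differ by exactly one transposition—interchanging the first and last voters while leaving the middle voter fixed—and that Proposition~\ref{pro:1st-vs-3rd} already compares precisely this pair of orderings. The entire theorem should then fall out as an immediate corollary, with no new analytic work required.

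Concretely, write the three distinct abilities as $a_1 < a_2 < a_3$, so that the anti-seniority sequence is $(a_1,a_2,a_3)$ and the seniority sequence is $(a_3,a_2,a_1)$. These two orderings share the same median voter $a_2$ in the middle position and simply swap the lowest and highest jurors between the first and last positions. First I would set up the dictionary with the notation of Proposition~\ref{pro:1st-vs-3rd} by taking $a=a_1$ (lowest), $b=a_3$ (highest), and $c=a_2$ (median). Under this relabeling the hypothesis $a<b$ of the proposition becomes $a_1<a_3$, which holds by assumption, and its conclusion $Q(b,c,a)>Q(a,c,b)$ becomes $Q(a_3,a_2,a_1)>Q(a_1,a_2,a_3)$, which is exactly the statement that the seniority sequence is strictly more reliable than the anti-seniority sequence. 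The strictness is inherited directly from the strict inequality in Proposition~\ref{pro:1st-vs-3rd}, valid because the abilities are distinct.

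There is essentially no obstacle beyond verifying that the relabeling is correct: all the genuine difficulty—the case split over the herding regions $H_2$ and $H_3$ and the interior region $S$, together with the sign analysis of the relevant polynomial difference—has already been discharged in the proof of Proposition~\ref{pro:1st-vs-3rd} (deferred to Appendix~\ref{sec:1st-vs-3rd}). Since that proposition holds for \emph{any} value of the median ability $c$, the comparison is uniform over the middle voter and no separate argument for the position of $a_2$ is needed. The theorem is therefore an immediate consequence.
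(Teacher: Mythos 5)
Your proposal is correct and is exactly the paper's argument: the paper likewise observes that the seniority order is obtained from the anti-seniority order by transposing the first and last voters and then applies Proposition~\ref{pro:1st-vs-3rd} with $b=a_3$, $a=a_1$, $c=a_2$. No differences worth noting.
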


For larger juries of odd size $n>3$, we ask the following question: When a
random jury is chosen, and assigned signals according to abilities, what is
the probability that the seniority order is correct when the two orders (SO
and AO) give different verdicts? Our results are given below.

For each odd-size jury $n=5,7,9,11,13$, we generate one million
random juries with abilities taken uniformly and independently from the interval
$[0,1]$. To each, we assign random signal vectors, generated
according to the abilities and assuming state of Nature $A$. We calculate the
majority verdicts $X$ and $Y$ by each, where $X$ is based on AO and $Y$ is
based on SO. Thus $A$ is the correct verdict. We record in Table~\ref{tab:SO-vs-AO} the number of
verdict pairs $(X,Y)$ (which we denote by $\#(X,Y)$), where neither ordering is correct
$(B,B)$, both are correct $(A,A)$, only AO is correct $(A,B)$ and only SO is
correct $(B,A)$. The final column $\rho$ gives the fraction of the cases where SO is correct
when exactly one of the two is correct:
\[
R = \frac{\#(B,A)}{\#(B,A)+\#(A,B)}.
\]

\begin{table}[ht]
\[
\begin{array}[c]{crrrrc}\toprule
\mbox{Jury size $n$} & \#(B,B) & \#(A,A) & \#(A,B) & \#(B,A) & R\,(\%) \\ \midrule
5 & 118912 & 565227 & 133707 & 182154 & 58 \\
7 & 92854 & 586934 & 131373 & 188839 & 59 \\
9 & 75829 & 603487 & 128540 & 192144 & 60 \\
11 & 64262 & 619118 & 124470 & 192150 & 61 \\ \bottomrule
\end{array}
\]
\caption{Relative performance of jury orderings: AO and SO}\label{tab:SO-vs-AO}
\end{table}

As we can see from Table~\ref{tab:SO-vs-AO}, When the verdicts under the two orderings differ, SO is correct about
58\% of the time for juries of size 5, increasing to about 61\% of the time
for juries of size 11. This analysis refines earlier simulation work of \citet{AlCh17a}.

\section{Comparison with other voting schemes}
\label{sec:comparison-with-other-schemes}

Now let us consider some ideal situation for achieving maximum possible reliability, where the jurors jointly choose
thresholds to maximize verdict reliability. We say such voting is \emph{strategic}. Although strategic voting apparently
cannot be realized easily in practice, we use the reliability under strategic voting as a benchmark. Another well-known
voting scheme is \emph{simultaneous} voting (or secret ballot), in which all voters vote independently.

\subsection{Strategic voting}
\label{sec:strategic}

Given \emph{a priori} probability $\theta$ of $A$, under sequential majority voting, let the
thresholds of the three voters of abilities $a,b,c$ in their voting order be respectively $x$,
$\{y_1,y_2\}$ and $\{z_1,z_2\}$, where $y_1$ and $y_2$ are the thresholds of the second voter after
vote of $A$ and $B$, respectively, by the first voter; $z_1$ and $z_2$ are the thresholds of the
third voter after vote of $AB$ and $BA$, respectively, by the first two voters. Note that we do not
need to consider prior votes of $AA$ and $BB$. Then the reliability $Q_{\textrm{str}}(\theta,
a,b,c, x, y_1, y_2, z_1, z_2)$ of the jury majority verdict under strategic voting is:
\begin{align*}
Q_{\textrm{str}}(\theta, a,b,c, \, & x, y_1, y_2, z_1, z_2) = \theta ((1-F(a,x)) F(b,y_1) (1-F(c,z_1)) \\
   & +F(a,x) (1-F(b,y_2)) (1-F(c,z_2))+(1-F(a,x)) (1-F(b,y_1))) \\
   & +(1-\theta ) ((1-G(a,x)) G(b,y_1) G(c,z_1) \\
   & +G(a,x) (1-G(b,y_2)) G(c,z_2)+G(a,x) G(b,y_2)).
\end{align*}
With strategic voting, the three voters jointly choose $x, y_1, y_2, z_1, z_2\in [-1,+1]$ to
maximize the above function. Direct calculation for $\theta=1/2$ gives
\begin{align*}%\label{eqn:R-t}
Q_{\textrm{str}}({\textstyle\frac12},  \, & a,b,c, x, y_1, y_2, z_1, z_2) =
   {\textstyle\frac{1}{64}} (a (x^2-1) (b c (y_1^2 (z_1^2-1)+y_2^2 (z_2^2-1) \\
  & -z_1^2-z_2^2+2)+4 (y_1 (z_1+1)+y_2 (z_2-1)+z_1-z_2-2)) \nonumber\\
  & +4 (b ((x-1) y_1^2 (z_1+1)+(x+1) y_2^2 (z_2-1)-x z_1-x z_2+z_1-z_2+2) \nonumber\\
  & +c ((x-1) y_1 (z_1^2-1)+(x+1) y_2 (z_2^2-1)+x z_1^2-x z_2^2-z_1^2-z_2^2+2)+8)). \nonumber
\end{align*}
It is straightforward to check that
\[
Q_{\textrm{str}}({\textstyle\frac12}, a,b,c,x,y_1,y_2,z_1,z_2)=Q_{\textrm{str}}({\textstyle\frac12},a,b,c,-x,-y_2,-y_1,-z_2,-z_1).
\]
This indicates that, as a quadratic function of $x$, the following
\[
\max\left\{Q_{\textrm{str}}({\textstyle\frac12}, a,b,c,x,y_1,y_2,z_1,z_2): -1\le y_1,y_2,z_1,z_1\le 1\right\}
\]
is symmetric with respect to the axis $x=0$. Therefore, we conclude from the continuity of function $Q_{\textrm{str}}$
and compactness of its variable domain that reliability of strategic sequential voting
\begin{equation}\label{eqn:profile-symmetry}
Q_{\textrm{str}}({\textstyle\frac12},a,b,c) =
\max_{-1\le y_1, y_2, z_1, z_2\le 1} Q_{\textrm{str}}({\textstyle\frac12},a,b,c,0,y_1, y_2, z_1, z_2).
\end{equation}
In fact, the above mathematics can be easily explained as follows: given two equally likely states of Nature,
$A$ and $B$, due to symmetry between $A$ and $B$, optimal strategic threshold profiles of a three-member jury
are \emph{symmetric} in the sense that if a profile $\tau = (x, y_1, y_2, z_1, z_2)$ is optimal for deciding $(A,B)$, then
$\tilde{\tau}=(-x, -y_2, -y_1, -z_2, -z_1) $ is optimal for deciding $(B,A)$. We also note that strategic voting
is a Nash equilibrium of the sequential voting game where the utility of each juror is 1 if the majority verdict
is correct, 0 otherwise.

\subsection{Simultaneous voting}
\label{sec:simultaneous}

When voting is simultaneous, each voter has a unique threshold since his vote is not dependent on any prior voting.
Given \emph{a priori} probability $\theta$ of $A$ and a three-member jury of abilities $\{a, b, c\}$ with respective
thresholds $\{x,y,z\}$, the reliability $Q_{\textrm{sim}}(\theta,a,b,c,x,y,z)$ of simultaneous voting is given as follows:
\begin{align*}
Q_{\textrm{sim}}(\theta,a,b,c,x,y,z)=\,& \theta((1-F_a(x))(1-F_b(y))+(1-F_a(x))F_b(y)(1-F_c(z))\\
    & +F_a(x)(1-F_b(y))(1-F_c(z)))+(1-\theta)(G_a(x)G_b(y) \\
    & +G_a(x)(1-G_b(y))G_c(z)+(1-G_a(x))G_b(y)G_c(z)).
\end{align*}
Hence the reliability of simultaneous honest voting is achieved by setting the thresholds at
$\tau_a(\theta)$, $\tau_b(\theta)$ and $\tau_c(\theta)$ respectively:
\[
  Q_{\textrm{sim}}(\theta, a,b,c)= Q_{\textrm{sim}}(\theta, a,b,c,\tau_a(\theta),\tau_b(\theta),\tau_c(\theta)).
\]

\subsection{Homogeneity and heterogeneity}

Let us start with a definition concerning homogeneity and heterogeneity of the abilities of the jurors.

\begin{definition}
Given any ability set $\{a,b,c\}$ with $0\le a\le b\le c\le 1$, let
\begin{eqnarray*}
\lambda(a,b,c) &=& \min\left\{\{a/b:\,\mbox{if $b>0$}\},\ \{b/c:\,\mbox{if $c>0$}\}\right\};\\
\mu(a,b,c) &=& \min\left\{\{b/a:\,\mbox{if $a>0$}\},\ \{c/b:\,\mbox{if $b>0$}\}\right\}.
\end{eqnarray*}
Specifically, we define $\lambda(0,0,0) = \mu(0,0,0) =1$ and $\mu(0,0,c) =+\infty$ if $c>0$.
We say that ability set $\{a,b,c\}$ with $0\le a\le b\le c\le 1$ has a \emph{homogeneity index} $\lambda(a,b,c)$
and a \emph{heterogeneity index} $\mu(a,b,c)$.
\end{definition}

Note that we always have $0\le\lambda(a,b,c)\le 1\le \mu(a,b,c)$. Apparently, the higher its homogeneity index,
the more homogeneous the ability set is, with perfect homogeneity achieved when the index is $1$ (i.e., $a=b=c$).
Similarly, the higher its heterogeneity index, the more heterogeneous the ability set is.

\subsection{Sequential voting vs.\ simultaneous voting}

The issue of when social influence (sequential rather than simultaneous voting) helps or hurts
reliability has been studied in a laboratory setting by \citet{FreyRijt20}. As opposed to the
simultaneous voting scheme analyzed by Condorcet, sequential voting has advantages and
disadvantages. One advantage is that later voters have more information. One disadvantage is the
possibility of herding, where the information of later voters is not used.

\begin{theorem}\label{thm:simultaneous-vs-sequential}
For any three-member jury on two equally likely states of Nature, if its ability set has a
homogeneity index at least $6/7$, then the reliability of simultaneous honest voting is at least
that of sequential honest voting. On the other hand, if its ability set has a heterogeneity index
at least $4/3$, then the reliability of sequential voting in the optimal order is at least that of
simultaneous voting.
\end{theorem}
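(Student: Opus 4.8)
The plan is to reduce the whole comparison to the sign of one explicit low-degree polynomial. First I would pin down the two reliabilities. Since the states are equiprobable, every honest threshold at prior $1/2$ equals $0$ by \eqref{eqn:honest_threshold}, so under simultaneous voting each juror votes the sign of his signal; summing the four majority-$A$ patterns as in \eqref{eqn:just-A-or-B} and simplifying gives the symmetric closed form
\[
Q_{\mathrm{sim}}(\tfrac12,a,b,c)=\frac{16+4(a+b+c)-abc}{32}.
\]
For the sequential side, Theorem~\ref{thm:optimal-sequence} says the optimal order is $(b,c,a)$, so the relevant quantity is $Q(b,c,a)$. Because $c>b$, the ablest juror votes second and never herds, so $(b,c,a)\notin H_2$; hence by \eqref{eqn:reliability} we have $Q(b,c,a)=q_0(b,c)$ on $H_3$ (when $a\le\rho(b,c)$) and $Q(b,c,a)=q(b,c,a)$ on $S$ (when $a>\rho(b,c)$), where $\rho(b,c)=\frac{2(2c-b)}{8-b^2-2bc}$.

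The crux is the difference on $S$. Writing $D=8-b^2-2bc>0$, the denominator of $q(b,c,a)$ is $128\,a\,c\,D>0$, so $q(b,c,a)-Q_{\mathrm{sim}}$ has the sign of its cleared numerator $N(a,b,c)$. I expect the payoff of the computation to be that $N$ collapses to a concave quadratic in $a$,
\[
N(a,b,c)=4(2c-b)^3+4b^2D\,a-bD(8-b^2)\,a^2,
\]
with negative leading coefficient $-bD(8-b^2)$ and positive constant term $4(2c-b)^3$. Thus $N$ has a single positive root $a^\ast$, and $N\ge0\iff a\le a^\ast$ while $N\le0\iff a\ge a^\ast$. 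Everything then reduces to comparing the last juror's ability $a$ with $a^\ast$, which I would do without computing $a^\ast$ by evaluating $N$ at the boundary of each constraint region.

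For the homogeneous case I would first verify that $\lambda\ge6/7$ forces region $S$: from $b\ge\frac67c$ and $b,c\le1$ one gets $\rho(b,c)\le\frac{16c}{35}<\frac{36}{49}c\le a$. On $[\frac67b,b]$ the ability $a$ lies to the right of the vertex $\frac{2b}{8-b^2}$ of $N(\cdot,b,c)$, so $N$ is decreasing there and it suffices to prove $N(\tfrac67b,b,c)\le0$; substituting $c=\gamma b$ with $\gamma\in[1,\tfrac76]$ and using monotonicity in $b$ (worst case $b=1/\gamma$) turns this into a one-variable polynomial inequality in $\gamma$ that I would check on $[1,\tfrac76]$. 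For the heterogeneous case ($\mu\ge4/3$, i.e.\ $a\le\frac34b$, $b\le\frac34c$) I want $N\ge0$, i.e.\ $a\le a^\ast$. On $S$ it suffices by concavity to show $N(\tfrac34b,b,c)\ge0$, which follows from crude bounds: $c\ge\frac43b$ gives $4(2c-b)^3\ge4(\tfrac53b)^3>12b^3\ge\frac{3b^3D(8-3b^2)}{16}$. On $H_3$, since $Q_{\mathrm{sim}}$ is increasing in $a$ (derivative $(4-bc)/32>0$) while $q_0(b,c)$ is constant in $a$, the worst case is $a=\min\{\rho,\tfrac34b\}$; using the continuity $q(b,c,\rho)=q_0(b,c)$ this reduces either to the just-proved $N(\tfrac34b,\cdot)\ge0$ (when $\rho\le\tfrac34b$) or to the two-variable inequality $q_0(b,c)\ge Q_{\mathrm{sim}}(\tfrac34b,b,c)$, which after $b=\beta c$ becomes $2\beta^2-7\beta+4+\tfrac34\beta^2c^2\ge0$; the remaining subcase $\rho>\tfrac34b$ forces (since $c\le1$) $\beta<\tfrac{7-\sqrt{17}}{4}$, the smaller root of $2\beta^2-7\beta+4$, so the inequality holds.

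The main obstacle is the algebraic simplification that produces the clean quadratic form of $N$: starting from the unwieldy rational function $q$, the cancellation down to $4(2c-b)^3+4b^2Da-bD(8-b^2)a^2$ is where all the real work lives, and I would lean on symbolic computation (as the paper already does) to carry it out and certify it. Once that form is in hand the rest is the elementary root-location argument above, and the constants $6/7$ and $4/3$ are exactly the thresholds that make the boundary evaluations $N(\tfrac67b,\cdot)\le0$ and $N(\tfrac34b,\cdot)\ge0$ go through.
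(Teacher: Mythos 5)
Your proposal is correct, and it reaches the result by a genuinely different route from the paper. The paper computes the difference $Q_{\textrm{sim}}(\frac12,a,b,c)-Q(b,c,a)$ as a rational function with numerator a single polynomial $h_1(a,b,c)$ (plus a second polynomial $h_2$ for the herding region $H_3$ in the heterogeneity part), and then certifies the required sign conditions under the constraints $\lambda\ge 6/7$, resp.\ $\mu\ge 4/3$, by black-box symbolic computation (Mathematica's \texttt{FullSimplify}/\texttt{FindInstance}). Your cleared numerator $N(a,b,c)=4(2c-b)^3+4b^2Da-bD(8-b^2)a^2$ is exactly $-h_1(a,b,c)$ --- I checked the expansion and it matches term by term --- so you have found the same polynomial, but you then replace the machine certification by a structural argument: $N$ is a concave quadratic in $a$ with positive constant term, so everything reduces to locating $a$ relative to the vertex $2b/(8-b^2)$ and to boundary evaluations at $a=\tfrac67 b$ and $a=\tfrac34 b$. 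Your subsidiary claims all check out: $\lambda\ge 6/7$ does force region $S$ (via $\rho(b,c)\le 16c/35<36c/49\le a$); the vertex lies left of $\tfrac67b$ since $8-b^2\ge 7/3$; the crude bound $4(\tfrac53 b)^3>12b^3\ge\tfrac{3}{16}b^3D(8-3b^2)$ gives $N(\tfrac34b,b,c)\ge0$; and in the $H_3$ subcase with $\rho>\tfrac34b$, the condition together with $c\le1$ yields $3\beta^3+6\beta^2-32\beta+16>0$, forcing $\beta\lesssim0.583<\tfrac{7-\sqrt{17}}{4}$, so $2\beta^2-7\beta+4>0$ as you assert. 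What your approach buys is a human-verifiable proof in which the only residual computations are the initial simplification to $N$ and a single one-variable polynomial inequality on $[1,7/6]$ (which holds, with ratio of sides at most about $0.9$ at the endpoint $\gamma=7/6$); what it costs is a longer case analysis that the paper's uniform Mathematica treatment avoids. The one presentational caveat is that two steps are phrased as expectations (``I expect the payoff of the computation to be\ldots'', ``I would check\ldots'') rather than carried out, but both come out true.
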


\begin{proof}
See Appendix~\ref{sec:proof-of-Thm-sim-vs-seq}.
\end{proof}

The results in Theorem~\ref{thm:simultaneous-vs-sequential} can be illustrated in Figure~\ref{fig:sim-vs-seq}, in which we set
the middle ability $b=1/2$ and consider the square $(a,c)\in[0,1/2]\times[1/2,1]$. The curved line divides this square into two
regions, where sequential (top-left) and simultaneous (bottom-right) voting are optimal. Within these two regions are the
respective rectangles where Theorem~\ref{thm:simultaneous-vs-sequential} guarantees simultaneous voting is better
($\lambda \ge 6/7$) and where sequential voting is better ($\mu \ge 4/3$). Note that when $b=1/2$, the condition
$\lambda \ge 6/7$ corresponds to rectangle $\{(a,c): a\ge 3/7, c\le 7/12\}$, while condition $\mu\ge 4/3$ corresponds to
rectangle $\{(a,c): a\le 3/8, c\ge 2/3\}$.

\medskip
\begin{figure}[ht]
\begin{center}\includegraphics[scale = 0.5]{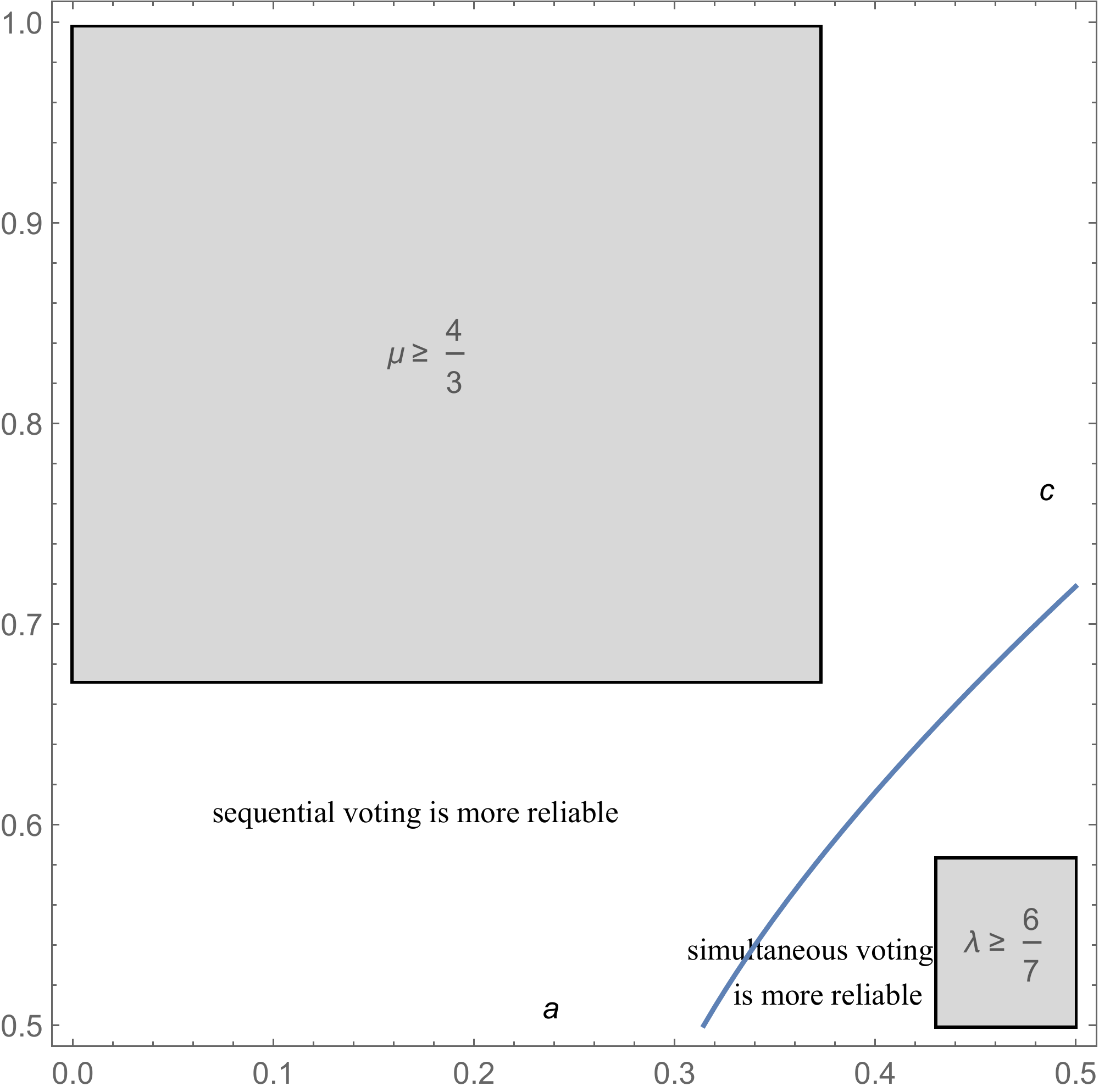}
\caption{Comparison of majority voting scheme: Simultaneous vs.\ sequential}\label{fig:sim-vs-seq}
\end{center}
\end{figure}

\begin{remark}
Theorem~\ref{thm:simultaneous-vs-sequential} shows that, when a jury is highly homogeneous, simultaneous voting
helps diversify the homogeneity. On the other hand, when a jury is highly heterogeneous,
sequential voting helps unify the heterogeneity. Interestingly,
our above theorem seems consistent with the result for simultaneous voting obtained by
\citet{BeNi17} in that high reliability of simultaneous voting needs homogeneity, while diversity calls for
sequential voting. \citet{BeNi17} show that when adding voters to a jury to increase its reliability, homogeneity is useful.
\end{remark}

\subsection{Diversity for reliability}

We can now answer the question of whether homogeneous or heterogenous juries have higher
reliability under sequential honest majority voting. {\citet{Kanazawa98} has shown that the
reliability of Condorcet simultaneous voting of a jury, with fixed average of the abilities of the
jurors, is increasing in the standard deviation of juror abilities. Here, we show that this is
similarly true in terms of heterogeneity and homogeneity indices, and sequential voting.}

\begin{theorem}\label{thm:monotonicity}
Under honest majority voting on two equally likely states of Nature, fixing the average ability of
any three-member jury, reliability is a strictly increasing {(resp.~decreasing)} function of the
heterogeneity {(resp.~homogeneity)} index. In other words, high heterogeneity {(resp.~homogeneity)}
of jurors’ abilities helps {(resp.~harms)} reliability.
\end{theorem}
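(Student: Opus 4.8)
The plan is to work throughout in the optimal voting order supplied by Theorem~\ref{thm:optimal-sequence}, so that the reliability of a jury with abilities $0\le a\le b\le c\le 1$ is $R(a,b,c):=Q(b,c,a)$, with $Q$ given piecewise by \eqref{eqn:reliability}. A useful preliminary observation is that, reading Theorem~\ref{thm:reliability-vs-ability} off the argument slots of $Q$, the quantity $R=Q(b,c,a)$ is non-decreasing in each of $a,b,c$ separately: the top ability $c$ and the bottom ability $a$ occupy the second and third slots of $Q$, where monotonicity is unconditional, while the median $b$ occupies the first slot, where the side condition ``first argument $\ge$ third argument$/2$'' reads $b\ge a/2$ and always holds. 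This alone does not settle the theorem, because raising diversity pushes the top ability up (helpful) while pushing the bottom ability down (harmful) at the same time, so the net effect has to be computed.

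To make ``increasing heterogeneity at fixed average'' precise, I would fix $a+b+c=3m$ and deform the jury by the average-preserving symmetric spread that holds the median $b$ fixed: writing $\bar s=(3m-b)/2$, set $a=\bar s-t$ and $c=\bar s+t$ with spread parameter $t\ge 0$, admissible on the range where $0\le a$, $c\le 1$ and $a\le b\le c$. Along this path both $b/a=b/(\bar s-t)$ and $c/b=(\bar s+t)/b$ strictly increase in $t$, so $\mu(a,b,c)=\min\{b/a,\,c/b\}$ strictly increases and, symmetrically, $\lambda(a,b,c)=\min\{a/b,\,b/c\}$ strictly decreases. Hence it suffices to prove that $R(t)$ strictly increases in $t$; monotonicity of $R$ as a function of $\mu$ (resp.\ $\lambda$) then follows at once. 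This symmetric spread is the canonical average-preserving way to raise $\mu$, and I regard it as the content underlying the ``increasing function of the index'' phrasing.

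Since $da/dt=-1$ and $dc/dt=+1$, the chain rule gives $dR/dt=\partial_2 Q(b,c,a)-\partial_3 Q(b,c,a)$, where $\partial_2,\partial_3$ denote differentiation in the second and third slots of $Q$. The task therefore reduces to the clean inequality that reliability is strictly more sensitive to the highest ability (the second voter) than to the lowest (the last voter): $\partial_2 Q(b,c,a)>\partial_3 Q(b,c,a)$. I would dispatch this via the partition in \eqref{eqn:reliability}. Note first that $H_2$ is vacuous here, since it would require $c\le b/2$ whereas $c\ge b$. In the herding region $H_3$ (possible only for the last voter), $R=q_0(b,c)=(b^2+4c(c+2))/(16c)$ is independent of $a$, so $\partial_3 Q=0$ while $\partial_2 Q=\tfrac14-b^2/(16c^2)>0$ because $c\ge b$; thus $dR/dt>0$ on $H_3$. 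On the no-herding region $S$ one uses the explicit rational form $q(b,c,a)$ from \eqref{eqn:reliability}.

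I expect the region-$S$ step to be the main obstacle. There $\partial_2 Q-\partial_3 Q$ is a rational function of $(a,b,c)$ whose denominator is a positive power-product of $a$, $c$ and $8-b^2-2bc$ (each positive on $S$, the last since $b,c\le 1$ force $8-b^2-2bc\ge 5$); clearing it reduces the claim to a polynomial inequality subject to $\rho(b,c)<a\le b\le c\le 1$, with $\rho$ as in \eqref{eqn:rho}. As in the appendices proving Propositions~\ref{pro:1st-vs-2nd} and \ref{pro:1st-vs-3rd}, I would establish positivity by factoring out the evident positive terms and checking the residual polynomial on the constraint region, typically by showing monotonicity in $a$ and then evaluating at the herding boundary $a=\rho(b,c)$, using symbolic computation where the algebra is unavoidable. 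A final bookkeeping point is continuity of $R(t)$ across the $S$--$H_3$ interface $a=\rho(b,c)$: since $Q$ is continuous there and the one-sided derivatives are both positive, $R(t)$ is strictly increasing over the whole admissible range of $t$. This gives strict monotonicity of $R$ in $\mu$, and the homogeneity statement is the same computation read through $\lambda$.
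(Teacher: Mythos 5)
Your deformation is not the one the paper uses, and the difference is substantive. The paper makes ``reliability as a function of the heterogeneity index'' precise by restricting to juries in geometric progression: for fixed mean $m$ it sets $\bar a=3m/(\mu^2+\mu+1)$, $\bar b=\mu\bar a$, $\bar c=\mu^2\bar a$, so that $b/a=c/b=\mu$ and the pair $(m,\mu)$ determines the jury uniquely; it then writes the optimal-order reliability as an explicit piecewise function $\bar Q(m,\mu)$ and verifies by symbolic computation that it is strictly increasing in $\mu\ge 1$. Your family instead fixes the median $b$ and symmetrically spreads $a$ and $c$. Two problems follow. First, with only $m$ and $\mu$ given, your construction does not define reliability as a function of $\mu$ at all --- the value depends on the auxiliary choice of $b$ --- so the step ``monotonicity of $R$ as a function of $\mu$ then follows at once'' is not justified. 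Second, your path does not contain the paper's: along the geometric-progression path the median $\bar b(m,\mu)$ itself strictly decreases for $\mu>1$, so the relevant derivative is $(\partial_1 Q-\partial_2 Q)\,db+(\partial_3 Q-\partial_2 Q)\,da$ with $da,db\le 0$, and one would additionally need the comparison $\partial_1 Q\le\partial_2 Q$ (sensitivity to the first voter versus the second), which you never address. What you prove --- that a mean-preserving spread of the two extreme abilities about a fixed median raises reliability --- is a genuine and interesting monotonicity statement, but it is not the theorem as the paper construes it, and it neither implies nor is implied by the paper's version.

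Separately, the computational heart of your argument, $\partial_2 Q(b,c,a)>\partial_3 Q(b,c,a)$ on the no-herding region $S$, is only announced (``I expect the region-$S$ step to be the main obstacle''), not established. Numerical spot checks suggest it is true, and deferring a polynomial inequality over a semialgebraic region to symbolic computation is consistent with the paper's own methodology; but as written this step is a placeholder, and it is precisely the step that carries the content of the result. The $H_3$ computation, the observation that $H_2$ is vacuous in the optimal order, and the continuity patching across the $S$--$H_3$ boundary are all fine.
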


\begin{proof}
See Appendix~\ref{sec:monotonicity}.
\end{proof}

\subsection{Optimality of honest voting}

According to
Theorem~\ref{thm:simultaneous-vs-sequential}, simultaneous voting can have higher reliability than sequential under
honest voting. This cannot occur under strategic voting, since the jurors could ignore prior voting and thus achieve
the reliability of simultaneous voting. The next theorem provides a sufficient condition for achieving strategic
optimality (see Section~\ref{sec:strategic}).

\begin{theorem}\label{thm:honest-vs-strategic}
Given equally likely states of Nature, sequential honest voting (in optimal order) achieves strategic optimality
if heterogeneity of the jurors' abilities is at least $7/4$, while simultaneous honest voting achieves strategic optimality
if the jurors all have the same ability.
\end{theorem}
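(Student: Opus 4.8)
The plan is to prove the two claims of Theorem~\ref{thm:honest-vs-strategic} separately, since they concern opposite extremes of the homogeneity spectrum. The benchmark throughout is $Q_{\textrm{str}}(\frac12,a,b,c)$ from \eqref{eqn:profile-symmetry}, and the goal is to show that honest voting (sequential in optimal order for the heterogeneous case, simultaneous for the homogeneous case) attains this maximum. Note two standing facts I can exploit. First, by the symmetry observation leading to \eqref{eqn:profile-symmetry}, the optimal strategic first threshold can be taken to be $x=0$; this matches the honest first threshold exactly, so in the optimal order $(b,c,a)$ the first voter already behaves strategically, and I only need to worry about the thresholds of the later voters. Second, since strategic voting dominates every threshold profile by definition, each direction reduces to exhibiting a profile (the honest one) that already meets the strategic optimum --- equivalently, checking a first-order/boundary optimality condition for $Q_{\textrm{str}}$ at the honest thresholds.

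For the simultaneous claim (homogeneous case $a=b=c$), I would argue that when all abilities are equal, the honest simultaneous thresholds $\tau_a(1/2)=\tau_b(1/2)=\tau_c(1/2)=0$ are in fact strategically optimal. The clean route is to show that with $\theta=1/2$ and a common ability, the unconstrained maximization of $Q_{\textrm{str}}(\frac12,a,a,a,x,y_1,y_2,z_1,z_2)$ is achieved at the all-zero profile, which corresponds to each juror voting his signal and ignoring others --- precisely simultaneous honest voting. Concretely, I would substitute $a=b=c$ into the explicit quadratic expression for $Q_{\textrm{str}}$ given after \eqref{eqn:profile-symmetry}, then verify that the symmetric point $x=y_1=y_2=z_1=z_2=0$ is a critical point and that the Hessian (or a direct factoring argument) makes it the global maximum on $[-1,+1]^5$. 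The symmetry relation already stated forces $x=0$, which halves the work.

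For the heterogeneity claim, I would start from the optimal honest order $(b,c,a)$ with $a<b<c$ and heterogeneity index $\mu\ge 7/4$, so $b\ge \frac74 a$ and $c\ge\frac74 b$. Substituting the honest thresholds --- the first voter's $x=0$, and the second and third voters' honest thresholds from \eqref{eqn:y_A}--\eqref{eqn:z_AB} evaluated at the order $(b,c,a)$ --- into $Q_{\textrm{str}}$, I would show these thresholds maximize $Q_{\textrm{str}}$. The key structural point is that high heterogeneity pushes the relevant herding-boundary quantities so that the honest later thresholds sit at an interior critical point of $Q_{\textrm{str}}$ in $(y_1,y_2,z_1,z_2)$, rather than being constrained suboptimally. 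I would compute $\partial Q_{\textrm{str}}/\partial y_i$ and $\partial Q_{\textrm{str}}/\partial z_i$ at the honest profile and confirm they vanish (or have the correct sign at a boundary), using the threshold formulas to simplify. Because $Q_{\textrm{str}}$ is quadratic in each threshold variable for fixed others, vanishing first derivatives plus the right concavity sign suffice for a maximum.

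The main obstacle will be the heterogeneity direction. Unlike the homogeneous case, there is no full symmetry to collapse the variables, and the honest thresholds for the later voters are nontrivial rational functions of $a,b,c$ (piecewise, due to possible herding). I expect the crux to be verifying that under the bound $\mu\ge 7/4$ the honest later-voter thresholds genuinely coincide with the strategic optimizers --- i.e.\ that the gradient of $Q_{\textrm{str}}$ vanishes there --- and that the threshold $7/4$ is exactly what guarantees the optimizing profile stays in the regime $S$ where no juror herds (so that the interior honest thresholds are the right ones). This likely reduces to showing that a collection of polynomial inequalities in $a,b,c$ hold on the region cut out by $\mu\ge 7/4$ and $0<a<b<c\le1$; as elsewhere in the paper, I would discharge these with \emph{Mathematica}, signposting the algebra and deferring details to the appendix.
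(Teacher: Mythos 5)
Your treatment of the homogeneous half is essentially the paper's: one shows $Q_{\textrm{sim}}(\frac12,a,a,a)\ge Q_{\textrm{str}}(\frac12,a,a,a,0,y_1,y_2,z_1,z_2)$ for all admissible thresholds and invokes \eqref{eqn:profile-symmetry} to restrict to $x=0$; whether you phrase this as a critical-point/Hessian check or as a direct domination is immaterial. The heterogeneous half, however, contains a genuine error of structure. You assert that the bound $\mu\ge 7/4$ is ``exactly what guarantees the optimizing profile stays in the regime $S$ where no juror herds,'' so that the honest later thresholds sit at an \emph{interior} critical point. The opposite is true: with $a\le\frac47 b$ and $b\le\frac47 c$, in the optimal order $(b,c,a)$ the last voter has the \emph{lowest} ability $a\le \frac47 b<\rho(b,c)$, so $(b,c,a)\in H_3$, the last voter herds, his honest thresholds are pinned at $\pm1$, and the honest reliability is $q_0(b,c)$ --- the last juror contributes nothing. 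The content of the theorem is that even this degenerate honest profile beats every strategic profile, which is not something a stationarity computation at an interior point can deliver, because there is no interior point to work at.

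There is also a logical gap independent of that: even where honest thresholds are interior, vanishing partial derivatives plus concavity \emph{in each coordinate separately} do not imply a global maximum of $Q_{\textrm{str}}$ over $[-1,1]^4$, since $Q_{\textrm{str}}$ is not jointly concave (compare $f(y,z)=yz$ at the origin: coordinatewise stationary and coordinatewise ``concave,'' yet a saddle). The paper sidesteps both problems by proving a uniform inequality rather than an optimality condition: it uses monotonicity of $Q_{\textrm{str}}$ in the last juror's ability to replace $a$ by $\frac47 b$, observes that the resulting difference $q_0(b,c)-Q_{\textrm{str}}$ has a sign-definite quadratic coefficient allowing a further reduction, and homogenizes to a polynomial inequality in the single ratio $\mu=c/b\ge 7/4$ and the four thresholds, which is then discharged by \emph{Mathematica}. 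Your plan would need to be restructured along these lines (or at least replaced by a direct global comparison of the honest value $q_0(b,c)$ against $Q_{\textrm{str}}$ over the whole threshold cube) to close.
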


\begin{proof}
See Appendix~\ref{sec:proof-of-Thm-hon-vs-str}.
\end{proof}

{Note that strategic optimality in our model is a benchmark for measuring the quality of a majority
verdict. In the work by \citet{DePi00}, it is assumed that all jurors care about the correctness of
the jury verdict, which is in contrast to the assumption in our model that they each care about the
correctness of their own votes.

\subsection{Sequential deliberation}

In our voting model, we assume that signals are private information. In some applications,
information exchange, which we call \emph{deliberation}, is possible and individuals reveal private
information in some order, which lets the quality of judgments increase. The question arises as to
what the optimal order in deliberation should be, which we answer in this subsection.

We refer the following model as \emph{sequential deliberation}: When each juror votes in our
original sequential voting model, the voting juror is required to reveal his signal in addition to
his vote. More specifically, given a jury of $n$ jurors of abilities $a_1, \ldots, a_n$ and a
priori probability $\theta$ for $N=A$, by (honest) sequential deliberation of the jury in sequence
$\bm{a}=(a_1, \ldots, a_n)$, we mean that, for $m=1, \ldots, n$, the juror in position $m$ receives
signal-ability pairs $(s_i, a_i)$, $i=1, \ldots, m$, and then votes $v_m=A$ if $\Theta\ge 1/2$ and
$v_m=B$ otherwise, where $\Theta\equiv\Theta\left(\theta, (s_1,a_1), \ldots, (s_k, a_m)\right)$ is
the posterior probability of $N=A$ associated with information of $\{(s_1, a_1) \ldots, (s_m,
a_m)\}$. In fact, using Bayesian update \eqref{eqn:posteriori} repeatedly, we can obtain for any
$n\ge 1$:
\begin{equation}\label{eqn:FIS}
\Theta = \Theta\left(\theta,(s_1,a_1), \ldots, (s_n, a_n)\right)
 =\frac{\theta \prod_{i=1}^n(1+\pi_i)}{1+\sum_{m=1}^{n}(2\theta-1)^{\zeta(m)}
 \sum_{1\le j_1<\cdots<j_m\le n}\prod_{k=1}^m \pi_{j_k}},
\end{equation}
where $\pi_i=s_i a_i$, and $\zeta(i)=1$ if $i$ is odd, $\zeta(i)=0$ otherwise, $i=1,\ldots,n$. The
following theorem provides an answer to optimal sequencing in sequential deliberation, where jurors
sequentially reveal their signals (and the associated votes).

\begin{theorem}\label{thm:seq-deliberation}
For sequential deliberation of any odd-size jury of known abilities, the probability of a correct
majority verdict is maximized by the seniority ordering.
\end{theorem}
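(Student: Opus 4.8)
The plan is to recast honest sequential deliberation as the running sign of a log-likelihood random walk, and then prove optimality of the seniority ordering by an adjacent-transposition (bubble-sort) exchange argument. Since $\theta_0=1/2$, the symmetry noted in Section~\ref{sec:preliminaries} lets me take reliability to be $\Pr_A(\text{majority verdict}=A)$. Specializing \eqref{eqn:FIS} (equivalently, iterating \eqref{eqn:posteriori}) at $\theta=1/2$, juror $m$ votes $v_m=A$ precisely when $\prod_{j\le m}(1+\pi_j)\ge\prod_{j\le m}(1-\pi_j)$, i.e.\ when $S_m:=\sum_{j\le m}\ell_j\ge0$ with $\ell_j:=\ln\frac{1+\pi_j}{1-\pi_j}$ and $\pi_j=a_js_j$. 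The increments $\ell_1,\dots,\ell_n$ are independent (signals are independent given Nature) with positive drift under $A$, so, writing $n=2k+1$, the reliability equals $\Pr_A\!\big(\#\{m:S_m>0\}\ge k+1\big)$, ties having probability $0$.

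By bubble sort it suffices to show that transposing an adjacent pair so as to place the abler juror first cannot lower reliability. Fix positions $i,i+1$ carrying abilities $\alpha<\beta$ and compare the order $(\dots,\alpha,\beta,\dots)$ with $(\dots,\beta,\alpha,\dots)$, coupling the two experiments so that all common jurors receive identical signals and the two swapped positions use the same increment values $X$ (from ability $\alpha$) and $Y$ (from ability $\beta$). With $T:=S_{i-1}$, in both orders $S_{i+1}=T+X+Y$ and every partial sum agrees except $S_i$, which is $T+X$ in the first order and $T+Y$ in the second. Hence all votes agree except $v_i$, the count $M$ of $A$-votes among positions $\neq i$ is the same random variable in both orders, and $M$ depends on $(X,Y)$ only through $W:=X+Y$. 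The verdict can differ only when $v_i$ is pivotal, i.e.\ $M=k$, so the reliability gain of the seniority transposition is
\[
\Delta=\mathrm{E}_A\big[\mathbf{1}(M=k)\,(\mathbf{1}(T+Y>0)-\mathbf{1}(T+X>0))\big],
\]
and the whole theorem reduces to proving $\Delta\ge0$.

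Next I would exploit the structure of $M$. Each downstream vote is $\mathbf{1}(S_j>0)=\mathbf{1}(W>-T-\sigma_j)$ (with $\sigma_j$ the relevant suffix partial sum), so $M=M(W)$ is nondecreasing in $W$ and the pivotal set is an interval $\{w_-\le W<w_+\}$; note that $c:=-T$ is itself one of the jump thresholds. Conditioning on the prefix and suffix (fixing $T$, the window $[w_-,w_+)$, and $c$) and renaming the dummy variables in the subtracted term, $\Delta$ becomes a conditional average of
\[
\int\!\!\int \mathbf{1}(w_-\le x+y<w_+)\,\mathbf{1}(x\le c<y)\,\big[f_\alpha(x)f_\beta(y)-f_\beta(x)f_\alpha(y)\big]\,dx\,dy,
\]
where $f_a$ is the density of a single increment under Nature $A$. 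On $x<y$ the sign of the bracket is a likelihood-ratio comparison of $f_\beta$ and $f_\alpha$.

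The hard part will be that this bracket is not sign-definite: the increment densities have ability-dependent supports $[-L_a,L_a]$ with $L_a=\ln\frac{1+a}{1-a}$, so the naive monotone-likelihood-ratio property fails on the tails and no pure stochastic-dominance argument can work. The device I would use is the likelihood tilt $f_a(\ell)=e^{\ell/2}m_a(\ell)$, where $m_a(\ell)=\tfrac{1}{4a}\big(1-\tanh^2(\ell/2)\big)^{3/2}$ is \emph{even} in $\ell$; this rewrites the bracket as $e^{(x+y)/2}\big[m_\alpha(x)m_\beta(y)-m_\beta(x)m_\alpha(y)\big]$, which vanishes unless exactly one coordinate lies in the shell $L_\alpha<|\cdot|\le L_\beta$. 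It is positive when the larger coordinate $y$ is in the shell (a region forcing $x+y>0$) and negative when the smaller coordinate $x$ is (forcing $x+y<0$), so the tilt $e^{(x+y)/2}$ systematically up-weights the favorable region. I would then pair the two regions by the reflection $(x,y)\mapsto(-y,-x)$, under which each $m_a$ is invariant and $W\mapsto-W$, and combine $e^{W/2}>e^{-W/2}$ on $W>0$ with the fact that $c$ is one of the thresholds bounding the pivotal window to conclude that the favorable contribution dominates, giving $\Delta\ge0$ (strictly, when $\alpha<\beta$ and a pivotal swing has positive probability). Bubble sort then yields optimality of the seniority ordering; the delicate bookkeeping relating the window endpoints $w_\pm$ to the threshold $c$ in the reflection step is where the real work lies.
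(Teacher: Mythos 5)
Your reduction is sound as far as it goes: the log-likelihood reformulation of deliberative voting via \eqref{eqn:FIS}, the coupling under an adjacent transposition, the observation that only the vote in the swapped position can change, and the resulting identity $\Delta=\mathrm{E}_A\big[\mathbf{1}(M=k)\,(\mathbf{1}(T+Y>0)-\mathbf{1}(T+X>0))\big]$ are all correct. But the argument stops exactly where the theorem does: you never establish $\Delta\ge 0$. The tilt $f_a(\ell)=e^{\ell/2}m_a(\ell)$ with $m_a$ even is a nice observation and your sign analysis of the antisymmetrized kernel is right, but the reflection $(x,y)\mapsto(-y,-x)$ sends the constraint $\{x\le c<y\}$ to $\{x<-c\le y\}$ and the pivotal window $[w_-,w_+)$ to $(-w_+,-w_-]$; neither is invariant unless $T=0$ and the window happens to be symmetric, so the favorable and unfavorable regions are not actually paired by your map, and the inequality $e^{W/2}>e^{-W/2}$ has nothing to bite on. You concede this yourself (``the delicate bookkeeping \dots{} is where the real work lies''), which is an admission that the central inequality is asserted rather than proved. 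As written, this is a gap, not a proof.

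For comparison, the paper's route is far more elementary and proves something stronger: for every $m$, seniority maximizes the probability of at least $m$ correct votes. The key point is that in \emph{deliberation} the signals themselves are revealed, so after an adjacent swap the posterior $\theta'$ entering position $k$ is unchanged and every vote other than the $k$th is literally the same random variable; the comparison then reduces to the single-juror correctness probability $P(\theta',a)$, which Lemma~\ref{lem:monotone-on-ability} shows is nondecreasing in $a$ by a one-line derivative computation. No random-walk machinery, no tilting, no reflection. The one respect in which your write-up is more explicit is that the swapped vote matters only on the pivotal event $\{M=k\}$, which is correlated with that vote through $W=X+Y$ --- that correlation is precisely what your unfinished integral is trying to control, and what the paper's argument dispatches by comparing marginal correctness probabilities conditional on the prefix. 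So your instinct about where the subtlety sits is not unreasonable, but you have neither resolved it analytically nor fallen back on the simpler decoupling; either way the theorem is not yet proved by what you have written.
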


\begin{proof}
See Appendix~\ref{sec:proof-of-Thm-seq-deliberation}.
\end{proof}
}

\section{Larger juries}
\label{sec:large-juries}

Up to now we have considered the optimal voting order problem only for juries
of size three, where we have the definitive result Theorem~\ref{thm:optimal-sequence}. For larger
juries we only know the optimal ordering for the last two jurors (decreasing
ability, see Theorem~\ref{thm:last-two}) and we have some numerical results comparing the seniority
order (SO) with the anti-seniority order (AO) in Section~\ref{sec:SO-vs-AO}, namely, the former
is more likely to be correct when the verdicts differ. For larger juries the
``curse of dimensionality'' prevents us from using the earlier technique of
deriving a formula for the reliability of a jury with a given order of
abilities because the number of thresholds increases exponentially in the jury
size. So in this section we will use simulation to estimate this reliability,
by generating many random signal vectors for a given ability order jury,
assuming one of the alternatives (say $A$) holds. For each, we compute the
verdict. Finally, we estimate reliability as the fraction of verdicts which
are correct (in this case, $A$).

In general, we cannot consider all possible orderings (permutations) of a set
of abilities. In addition to SO and AO, we propose a new ordering which we
call the Ascending-Descending Order (ADO). For a jury of odd size $n=2m+1$,
this ordering begins with the most able $m+1$ jurors voting in increasing
(anti-seniority) order and then has the $m$ least able jurors voting in
descending (seniority) order. In particular,  the median ability juror votes
first, the highest ability juror votes in the median position and the least
able juror votes last.

\begin{definition}
Given a set of $n=2m+1$ distinct abilities indexed in increasing order
$0 \le a_{1}<a_{2}<\dots<a_{n}\le 1$, the Ascending-Descending Order (ADO) is given by
\[
\left(  a_{m+1},a_{m+2},\dots,a_{n},a_{m},a_{m-1},\dots,a_{2},a_{1}\right).
\]
\end{definition}

Note that for $n=3$ our main result, Theorem~\ref{thm:optimal-sequence}, says that the ADO is the
unique optimal ordering. Furthermore, ADO has its last two jurors voting in decreasing order, which
is a requirement of any optimal ordering according to Theorem~\ref{thm:last-two}.  Our simulations,
admittedly limited in number, have not yet produced any counter-examples to the following.

\begin{conjecture}\label{conj:ADO-optimality}
For any fixed set of an odd number of distinct abilities for a jury, the
ordering that produces the highest reliability is the ADO.
\end{conjecture}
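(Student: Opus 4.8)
The plan is to prove ADO optimality by an exchange (``bubble-sort'') argument driven by induction on the half-size $m$, reducing each adjacent transposition to the three-juror comparisons already in hand. The base case $n=3$ is exactly Theorem~\ref{thm:optimal-sequence}. For the inductive step I would fix an arbitrary order, repeatedly apply local improvements, and show (a) that any order can be transformed into the ADO by a sequence of adjacent swaps, none of which lowers reliability, and (b) that the ADO itself admits no improving swap, so it is the unique global optimum. A naive alternative — peeling off the first (median) or last (weakest) voter to reduce directly to a jury of size $n-2$ — does not work cleanly, since the sequential interdependence means the behaviour of the remaining jurors still depends on the peeled votes; so the induction must be carried through the local-swap comparisons rather than by a literal sub-jury reduction.

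The engine of the swap comparisons is a conditioning reduction generalizing the proof of Theorem~\ref{thm:last-two}. To compare an order with its neighbour-swap at positions $k$ and $k+1$, I would condition on the votes of all other jurors and on the Bayesian posterior $\theta$ entering position $k$. Because $n-2=2m-1$ is odd, whenever the verdict is not yet forced the running tally leaves exactly a one-vote lead, which collapses the tail into a unanimous sub-problem with an induced favoured alternative. More generally the swapped pair, together with the induced $\theta$ and the induced favoured side, forms a two- or three-juror module whose comparison is governed by Lemma~\ref{lem:unanimous-voting}, Proposition~\ref{pro:1st-vs-2nd}, and Proposition~\ref{pro:1st-vs-3rd}. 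Crucially, the sign of each of those comparisons is independent of $\theta$ and of which side is favoured, so after integrating over the conditioning the local swap has a definite sign. This is precisely what lets a formula-free argument stand in for the explicit reliability function that the curse of dimensionality denies us for $n>3$.

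I would then decompose the target into three claims: (i) the ablest juror belongs in the middle position $m+1$; (ii) the ascending block (positions $1$ through $m$) is sorted increasingly; and (iii) the descending block (positions $m+2$ through $n$) is sorted decreasingly, with the weakest juror last. Claims~(ii) and~(iii) within each block should follow from the generalized first-two comparison of Proposition~\ref{pro:1st-vs-2nd} — among jurors who all vote before the peak, putting the lower ability earlier raises reliability — together with the iterated tail constraint of Theorem~\ref{thm:last-two}, which forces the post-peak block to descend and the minimum to sit last. Claim~(i) is the crux and should be the repeated application of the end-voter comparison, Proposition~\ref{pro:1st-vs-3rd}: placing the ablest juror too early invites herding that suppresses later information, while placing him too late risks a verdict that is already decided, and the balance point is the median position.

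The honest difficulty to flag is that the ADO is a \emph{non-monotone} order, so this is not a standard rearrangement inequality — neither ``sort up'' nor ``sort down.'' The reliability-to-go must depend on a juror's position in a single-peaked way, and the real content is showing that the two-juror exchange operators possess the requisite single-crossing or supermodularity property \emph{uniformly} in the surrounding state. The conditioning reduction controls the pair being swapped, but the continuation after the pair depends on the running vote differential, which the pair itself shifts; decoupling this feedback — proving that the continuation's dependence never reverses the sign inherited from the three-juror base cases — is where I expect the argument to stall. That is presumably why the statement is posed here as Conjecture~\ref{conj:ADO-optimality} rather than proved as a theorem.
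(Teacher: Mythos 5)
The statement you were given is posed in the paper as Conjecture~\ref{conj:ADO-optimality}, not as a theorem: the authors prove it only for $n=3$ (Theorem~\ref{thm:optimal-sequence}), establish the last-two-positions constraint for general odd $n$ (Theorem~\ref{thm:last-two}), and otherwise support the claim solely by simulation for $n=5,7,9,11$. There is therefore no proof in the paper to compare against, and your proposal --- which you yourself correctly flag as stalling --- does not close the gap.

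The obstruction is worth naming more precisely than ``feedback from the continuation.'' The conditioning trick behind Theorem~\ref{thm:last-two} works only for the \emph{final} two positions: there the swapped pair has no successors, so once you condition on the first $n-2$ votes the pair's contribution is exactly a two-juror unanimity problem with some induced prior $\theta$, and Lemma~\ref{lem:unanimous-voting} is proved for arbitrary $\theta$. For an interior swap at positions $k,k+1$ the pair is not a self-contained module: the thresholds and votes of every later juror depend on which of the two voted first and how they voted, so the reliability-to-go after the pair changes with the swap and cannot be integrated out of the comparison. Moreover, the local comparisons you want to invoke, Propositions~\ref{pro:1st-vs-2nd} and \ref{pro:1st-vs-3rd}, are proved in the paper only for a three-member jury with prior exactly $1/2$ entering the first position, by explicit computation of $Q(a,b,c)$ and machine verification of polynomial inequalities; they are not $\theta$-uniform statements, and no conditioning argument underlies them that would let you transplant them as modules inside a larger jury. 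Finally, even granting sign-definite adjacent swaps, local unimprovability does not by itself yield global optimality of a non-monotone target order without the single-crossing or supermodularity property you mention --- which is precisely what is unproven. Your plan is a reasonable map of what a proof would need, but every load-bearing step is currently open; the paper's own position (conjecture plus simulation evidence) remains the state of the art.
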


Of course we have already proved the conjecture for juries of size three in
Theorem~\ref{thm:optimal-sequence}. Even if the ADO turns out to not always be optimal, our simulation
results seem to show that it a good heuristic, producing high reliabilities.

Our first simulation is concerned with a jury of size five, which we take as
the ability set $\{0.1,0.3,0.5,0.7,0.9\}$, which is in some sense
uniformly distributed. For simplicity, we denote them by $\{1,3,5,7,9\}$ for the rest of this section.
If two abilities are very close, then transposing the
voting order of the two corresponding jurors would not greatly affect the
reliability, and the difference might be less than the error in the
simulation. There are $5!=\allowbreak120$ orderings of such a jury, but only
$60$ for which the last two abilities are in decreasing order. We calculated
100,000 random signal vectors (generated from alternative $A$) for each such
ordering and counted the number of (correct) verdicts for $A$, this number
divided by 100,000 is an estimate of reliability. Of these 60, exactly 9 had
an estimated reliability $\hat{\rho}$ above 76\%. We then calculated 1,000,000
trials for these, listing the results in Table~\ref{tab:simulation1} below. The estimates
have error less than 0.1\% with a confidence of 90\%.  Note that ADO has the
highest estimated reliability, at 77\%.

\begin{table}[ht]
\[
\begin{array}[c]{ccc}\toprule
\mbox{Voting Order} & \mbox{No.~of $A$ Verdicts} & \mbox{Est.~Reliability $\hat{\rho}$}\\ \midrule
(5,7,9,3,1) & 770,199 & 77.0\%\\
(7,5,9,3,1) & 766,450 & 76.6\%\\
(1,7,9,5,3) & 762,953 & 76.3\%\\
(7,3,9,5,1) & 762,488 & 76.2\%\\
(7,9,1,5,3) & 762,437 & 76.2\%\\
(5,9,7,3,1) & 762,326 & 76.2\%\\
(7,9,5,3,1) & 762,186 & 76.2\%\\
(7,9,3,5,1) & 761,472 & 76.1\%\\
(7,1,9,5,3) & 761,321 & 76.1\% \\ \bottomrule
\end{array}
\]
\caption{Estimated reliability of five-member juries}\label{tab:simulation1}
\end{table}

The increasing order $(1,3,5,7,9)$ (AO) had $\hat{\rho}=71.5\%$; the
decreasing order $(9,7,5,3,1)$ (SO) had $\hat{\rho}=75.5\%$; the ordering
with the lowest estimated reliability was $(5,3,1,9,7)$ with $\hat{\rho}=64.2\%$.
It is worth noting that there is more than a 10\% difference in
reliability between ADO and the worst ordering.

To look at seven-member juries, it is no longer feasible to check the
reliability of all $7!/2=\allowbreak2520$ orderings even of a particular set
of abilities. Instead, we generated 500 random juries $\vec{a}$ (independently picking each juror
with an ability uniformly in $[0,1]$). For each, we calculated the estimated reliability
$\hat{\rho}(\vec{a})$ and $\hat{\rho}^{\ast}(\vec{a}^{\ast})$, where $\vec{a}^{\ast}$ is the reordering of $\vec{a}$
to ADO. For example, if $\vec{a}=(0.3,0.6,0.4,0.1,0.8,0.5,0.7)$ then $\vec{a}^{\ast}=
(0.5,0.6,0.7,0.8,0.4,0.3,0.1)$. We used 10,000 trials for each randomly
generated jury $\vec{a}$, and calculated the difference $\hat{\Delta}(\vec{a})  =
\hat{\rho}^{\ast}(\vec{a})-\hat{\rho}(\vec{a})$ between reliability in ADO order and in the originally
generated order. The mean value of $\hat{\Delta}$ was $0.036969$ (roughly a $4\%$ improvement
in the reliability of the verdict), the maximum was $0.1609$ (a $16\%$ improvement in
reliability) and the minimum was $-0.0064$. The frequency distribution is given in
Table~\ref{tab:simulation2} (no data at boundary points).

\begin{table}[ht]
\[
\begin{array}[c]{lcccccccccc}\toprule
\mbox{Range} & (-.02,0) & (0,.02) & (.02,.04) & (.04,.06) & (.06,.08) \\ \midrule
\mbox{Frequency} & 10 & 139 & 171 & 94 & 45 \\ \bottomrule\toprule
\mbox{Range} & (.08,.10) & (.10,.12) & (.12,.14) & (.14,.16) & (.16,.18) \\ \midrule
\mbox{Frequency} & 20 & 14 & 6 & 0 & 1 \\ \bottomrule
\end{array}
\]
\caption{Frequency distribution of $\hat{\Delta}$}\label{tab:simulation2}
\end{table}

So from the heuristic point of view, it certainly seems a good idea to rearrange a jury (of size
7), which arrives in random order, into one in ADO order, if this is allowed. It nearly always
gives an improvement in reliability. From the theoretical point of view, it at first may appear
that we have found ten counter-examples to our Conjecture~\ref{conj:ADO-optimality} regarding the
optimality of ADO, in that the original order of the jurors that produced a negative $\hat{\Delta}$
is more reliable than ADO. However, the negative values were very close to zero. There were 11
juries $\vec{a}$ where $\hat{\Delta}(\vec{a})<0.001$ and we recalculated $\hat{\Delta}$ for all of
them with one million trials. All came out with a positive value of $\hat{\Delta}$. There was one
additional jury $\vec{a}$ for which $\hat{\Delta}(\vec{a})$ was not negative but was less than
$0.001$. We also recalculated this for one million trials and checked that it still came out
positive.

{Despite the aforementioned efforts, the search for a counter-examples to Conjecture~1 is far from
exhaustive.}

Another way of evaluating the ADO is to compare it to SO (Seniority, or
Descending Order), in the same way we compared SO to AO in Section~\ref{sec:SO-vs-AO}. For odd
values of $n=5, 7, 9, 11$ , we generate one million random juries with associated random
signals generated, assuming Nature is $A$. We record in Table~\ref{tab:SO-vs-ADO} (with the
same notation as in Table~\ref{tab:SO-vs-AO}) the number of verdict pairs
for ADO and SO, respectively, and determine the probability that ADO is the
correct verdict (that is, $A$) when the two verdicts differ.

\begin{table}[ht]
\[
\begin{array}[c]{crrrrc}\toprule
\mbox{Jury size $n$} & \#(B,B) & \#(A,A) & \#(A,B) & \#(B,A) & R\,(\%) \\ \midrule
5 & 150631 & 656425 & 101988 & 90956 & 53 \\
7 & 117189 & 682142 & 107038 & 93631 & 53 \\
9 & 94630 & 702166 & 109739 & 93465 & 54 \\
11 & 78453 & 719629 & 110279 & 91639 & 55  \\ \bottomrule
\end{array}
\]
\caption{Relative performance of jury orderings: ADO and SO}\label{tab:SO-vs-ADO}
\end{table}

From Table~\ref{tab:SO-vs-ADO}, we see that when
ADO and SO give different verdicts, ADO is the correct one about 53\% of the
time for juries of size five, rising to about 55\% of the time for juries of
size 11.

\section{Concluding remarks}
\label{sec:conclusions}

When jurors of differing abilities vote sequentially, the reliability of the majority verdict
depends on the voting order. We have shown that, regardless of the set of abilities on a
three-member jury, if the two states are equally likely, then the optimal voting order is always
median ability first, then highest ability, then lowest ability. The optimality of the ordering is
robust for the \emph{a priori} probability $\theta$ around $1/2$.

Our earlier paper, \citet{AlCh17a} showed that this was true for many ability sets, but this is the
first paper to establish this result rigorously for all ability sets. Analogous results for larger
juries appear to be beyond current methods, but presumably special cases could be studied, for
example, when only one juror has a different ability than the others.

Since the optimal ordering is established, we can fix this ordering and then make various
reliability comparisons. To do this, we have defined indices that measure homogeneity and
heterogeneity of the set of jurors. We find that for sufficiently homogeneous juries, simultaneous
voting is more reliable than sequential voting. On the other hand, when juries are sufficiently
heterogeneous, sequential voting is more reliable. In a similar vein, we find that the reliability
of a jury of fixed average ability is increasing in its heterogeneity. For sufficiently
heterogeneous juries, the thresholds given by honest voting are in fact the optimal joint
thresholds, so that honest voting optimizes the reliability of the majority verdict. That is,
honest voting is also strategic.

For larger juries, we have introduced the Ascending-Descending Order of voting and shown that it is
at least an excellent heuristic for obtaining high reliability, if not actually optimal. Further
work in this direction for large juries would be useful, including asymptotic results analogous to
those of Condorcet.

%{\color{red}We would like to remark that the Full Information Solution, which has been considered
%by \citet{OtSo01} under the notion of \emph{the first-best optimal} and is considered in our
%Proposition~\ref{pro:FIS} for Condorcet juries, can no longer be guaranteed for any voting sequence
%under majority rule for a jury in our model where signals are continuous.}

%\newpage
\bigskip\bigskip
\renewcommand{\thesection}{A}
\setcounter{equation}{0}
\renewcommand{\theequation}{A-\arabic{equation}}
\renewcommand{\thetable}{A-\arabic{table}}

\section*{Appendix}

\subsection{Proof of Lemma~\ref{lem:unanimous-voting}}
\label{sec:proof-last-two}

{As in Section~\ref{sec:reliability-duo}, we let $0\le b < c\le 1$ be the jurors' abilities and $A$
be the designated favored alternative (i.e., $B$ be the alternative needing unanimous vote). Denote
by $\theta$ the a priori probability of $A$ and by $\bar{Q}(\theta; b,c)$ the reliability of the
jury verdict.} Using the last equation in \eqref{eqn:reliability-duo}, we compute $\bar{Q}(\theta;
b,c)$ for any $\theta\in[0,1]$ and any $b,c\in[0,1]$. We then prove that $\bar{Q}(\theta; c,b) -
\bar{Q}(\theta; b,c)\ge 0$ when $c \ge b$. To this end, we use \emph{Wolfram Mathematica}, a
mathematical symbolic computation program, with its command
\verb"FullSimplify"[\emph{expr},\emph{assum}], where \emph{expr} is replaced by expression
$\bar{Q}(\theta; c,b) - \bar{Q}(\theta; b,c)\ge 0$ and \emph{assum} is replaced by constraints $c >
b$ and $\theta,b,c\in [0,1]$. The command outputs \verb"True" in a couple of seconds.
Alternatively, we also use command \verb"FindInstance"[\emph{expr},\emph{vars}], where \emph{expr}
is replaced by the system of inequalities $\bar{Q}(\theta; c,b) - \bar{Q}(\theta; b,c)< 0$, $c > b$
and $\theta, b,c\in [0,1]$, while \emph{vars} is replaced by the specification of variables
$\{\theta,b,c\}$. The command outputs the empty set $\emptyset$ (i.e., there is no solution to the
system) in a couple of seconds. {The reader is referred to Section~3.1 of \citet{Chen21}, where
details of the aforementioned computation are provided.

In the remainder of the Appendix, wherever we use Mathematica in the aforementioned ways for an
algebraic proof of an inequality subject to some (inequality) constraints, we will simply refer the
reader to the location in \citet{Chen21} for details, as we do above.}

\subsection{Remaining proof of Proposition~\ref{pro:1st-vs-2nd}}
\label{sec:1st-vs-2nd}

\subsubsection*{Part 1}

As explained in Appendix~\ref{sec:proof-last-two} above, for proving inequality of $f_1(a,b,c)>0$
with constraints $a,b,c\in [0,1]$, $\rho(b,a)<c\leq \rho(a,b)$ and $b/2 <a <b$, {see Section~3.2 of
\citet{Chen21} for details}.

\subsubsection*{Part 2}

Now consider  $c > \rho(a,b)$. If $(b,a,0)\in H_2$, then
\[
\Delta_1(a,b,c)  =q(a,b,c)-\frac{b+2}{4} = \frac{f_2(a,b,c)}{128 b c (8-a^2-2 a b)},
\]
where
\begin{align*}
f_2(a,b,c) = & -4 (a - 2 b)^3 + 4 (8 - a^2 - 2 a b) (a^2 + 4 a b - 4 b^2) c \\
             & \; + (2 b-a) (8 - a^2 - 2 a b)^2 c^2.
\end{align*}
As a convex quadratic function of $c$, $f_2(a,b,c)$ is minimized at
\[
c_1=\frac{2 (4 b^2-a^2-4 a b)}{(2 b-a) (8-a^2-2 a b)}
\]
to
\[
f_2(a,b,c_1)=\frac{64 a^2 b (b-a)}{2b-a} > 0.
\]

If $(b,a,0)\not\in H_2$, then it is easy to check that $\rho(a,b)>\rho(b,a)$ and hence $c >
\rho(a,b)>\rho(b,a)$. We obtain
\[
\Delta_1(a,b,c) = q(a,b,c)-q(b,a,c)=\frac{(b-a) f_3(a,b,c)}{128 a b c (8-2 a b-a^2) (8-2 a b-b^2)},
\]
where
\begin{align*}
  f_3(a,b,c)=\; & c^2(a+b)(8-2 a b-a^2)(8-a^2-b^2)(8-2 a b-b^2)\\
               & -4 c(8-2 a b-a^2)(a^2+a b+b^2)(8-2 a b-b^2) \\
               & -4(a+b)(10 a^3 b-7 a^2 b^2-8 a^2+10 a b^3-16 a b-8 b^2),
\end{align*}
which is positive under constraints $c> \rho(a,b)$ and $b/2 < a < b$ with $a,b,c\in [0,1]$
{(Section~3.3 of \citet{Chen21})}.

\subsection{Remaining proof of Proposition~\ref{pro:1st-vs-3rd}}
\label{sec:1st-vs-3rd}

According to \eqref{eqn:reliability}, if $c\le a/2$, then $c<b/2$, we have $\Delta_2(a,b,c) \equiv
Q(b,c,a) - Q(a,c,b)=(b-a)/4 >0$. If $a/2 <c\le b/2$, then according to \eqref{eqn:rho},
$b>\rho(a,c)$, which leads to
\[
\Delta_2(a,b,c) = \frac{2+b}4 - q(a,c,b),
\]
which is positive under constraints $a/2 <c\le b/2$ and $a,b\in[0,1]$ {(Section~3.4 of
\citet{Chen21})}. Therefore, we only need to consider $c > b/2$. If $a \le \rho(b,c)$ and $b \le
\rho(a,c)$, then
\[
\Delta_2(a,b,c) = q_0(b,c)-q_0(a,c)=\frac{b^2-a^2}{16 c}>0.
\]
If $a \le \rho(b,c)$ and $b > \rho(a,c)$, then
\[
\Delta_2(a,b,c) =q_0(b,c)-q(a,c,b)=\frac{f_4(a,b,c)}{128 b c (8-a^2-2 a c)},
\]
where
\begin{align*}
  f_4(a,b,c) = \; & 4 a^3 - 32 a^2 b + 4 a^4 b + 64 a b^2 - 16 a^3 b^2 + a^5 b^2 + 64 b^3 - 8 a^2 b^3 \\
     &  - 24 a^2 c - 128 a b c + 24 a^3 b c - 128 b^2 c + 2 a^4 b^2 c - 16 a b^3 c + 48 a c^2 \\
     &  + 128 b c^2 + 16 a^2 b c^2 +  64 a b^2 c^2 - 4 a^3 b^2 c^2 - 32 c^3 - 32 a b c^3 - 8 a^2 b^2 c^3,
\end{align*}
which is positive when $a\leq \rho(b,c)$, $b>\rho(a,c)$ and ${b}/{2}<c\leq 1$ with $0\le a<b\leq 1$
{(Section~3.5 of \citet{Chen21})}.

If $a > \rho(b,c)$, then $b > a$ and $c>b/2$ imply that $b > \rho(a,c)$. Therefore, we have
\[
\Delta_2(a,b,c) = q(b,c,a)-q(a,c,b))= \frac{(b-a) f_5(a,b,c)}{128 a b c (8-a^2-2 a c) (8-b^2-2 b c)},
\]
where
\begin{align*}
f_5(a,b,c) = \; & -32 a^3 + 224 a^2 b - 32 a^4 b + 224 a b^2 + 36 a^3 b^2 - 8 a^5 b^2 - 32 b^3 \\
 & + 36 a^2 b^3 - 4 a^4 b^3 - 32 a b^4 - 4 a^3 b^4 + a^5 b^4 - 8 a^2 b^5 + a^4 b^5 + 192 a^2 c \\
 & + 192 a b c - 56 a^3 b c +  192 b^2 c - 144 a^2 b^2 c - 8 a^4 b^2 c - 56 a b^3 c - 16 a^3 b^3 c \\
 & + 2 a^5 b^3 c - 8 a^2 b^4 c + 4 a^4 b^4 c + 2 a^3 b^5 c - 384 a c^2 - 384 b c^2 - 48 a^2 b c^2 \\
 & - 48 a b^2 c^2 + 16 a^3 b^2 c^2 + 16 a^2 b^3 c^2 + 4 a^4 b^3 c^2 + 4 a^3 b^4 c^2 + 256 c^3 + 128 a b c^3,
\end{align*}
which is {positive} when $a>\rho (b,c)$, $b>\rho (a,c)$ and ${b}/{2}<c\leq 1$ with $ 0\leq a<b\leq
1$ {(Section~3.6 of \citet{Chen21})}.

\subsection{Proof of Theorem~\ref{thm:reliability-vs-ability}}
\label{sec:proof-of-monotonicity}

To prove the stated monotonicity of $Q(a,b,c)$ in $a,b,c$ respectively, we only need to do so for
each of the three pieces of $Q(a,b,c)$ in \eqref{eqn:reliability} thanks to the continuity of
$Q(a,b,c)$ over $a,b,c\in[0,1]$. The stated monotonicity of $Q(a,b,c)\equiv (2+a)/4$ is evident
over domain $H_2$. We show that $q_0(a,b)$ and $q(a,b,c)$ have the stated monotonicity over $H_3$
and $S$, respectively. We accomplish this by verifying {(Section~3.7 of \citet{Chen21})} that each
of the following two partial derivatives
\[
\frac{\partial q_0(a,b)}{\partial a}, \ \frac{\partial q_0(a,b)}{\partial b}
\]
is non-negative over $H_3$; and each of the following three partial derivatives
\[
\frac{\partial q(a,b,c)}{\partial a}, \ \frac{\partial q(a,b,c)}{\partial b}, \ \frac{\partial q(a,b,c)}{\partial c}
\]
is non-negative over $S$, with the first one having the additional constraint $a\ge c/2$.

\subsection{Proof of Theorem~\ref{thm:simultaneous-vs-sequential}}
\label{sec:proof-of-Thm-sim-vs-seq}

Let homogeneity index $\lambda(a,b,c)\ge 6/7$. Then with formulae derived in
Section~\ref{sec:simultaneous}, we have
\[
Q_{\textrm{sim}}\left({\textstyle\frac12},a,b,c\right) - Q(b,c,a) = \frac{h_1(a,b,c)}{128 a c \left(8-2 b c-b^2\right)},
\]
where
\begin{align*}
h_1(a,b,c) = \; & 64 a^2 b - 32 a b^2 + 4 b^3 - 16 a^2 b^3 + 4 a b^4 + a^2 b^5 \\
  & - 24 b^2 c - 16 a^2 b^2 c + 8 a b^3 c + 2 a^2 b^4 c + 48 b c^2 - 32 c^3,
\end{align*}
which is non-negative when $\lambda(a,b,c)\ge 6/7$ {(Section~3.8 of \citet{Chen21})}. On the other
hand, let the heterogeneity index $\mu(a,b,c)\ge 4/3$, then we have
\[
Q(b,c,a)-Q_{\textrm{sim}}\left(\textstyle{\frac12},a,b,c\right)  = \left\{
\begin{array}{ll}
 {\displaystyle \frac{h_2(a,b,c)}{32c}}, & \mbox{if } a \left(b^2+2 b c-8\right)+4 c\geq 2 b; \\
 {\displaystyle \frac{-h_1(a,b,c)}{128 a c \left(8-2 b c-b^2\right)}}, & \mbox{otherwise}; \\
\end{array}
\right.
\]
where
\[
h_2(a,b,c) = b c (a c-4)+4 c (c-a)+2 b^2.
\]
Both $h_2(a,b,c)$ and $-h_1(a,b,c)$ are non-negative when $\mu(a,b,c)\ge 4/3$ with the former
non-negativity having the additional constraint $a \left(b^2+2 b c-8\right)+4 c\geq 2 b$
{(Section~3.9 of \citet{Chen21})}.

\subsection{Proof of Theorem~\ref{thm:monotonicity}}
\label{sec:monotonicity}

{Let us first focus on heterogeneity. Fix $m\in (0,1)$ as the average ability of the jury.} Then
the three abilities can be expressed as functions of $m$ and the heterogeneity index $\mu$ as
follows:
\[
\bar{a}(m,\mu)=\frac{3 m}{\mu ^2+\mu +1}, \quad \bar{b}(m,\mu)=\frac{3 m\mu}{\mu ^2+\mu +1},
 \quad \bar{c}(m,\mu)=\frac{3 m\mu^2}{\mu ^2+\mu +1}.
\]
Then the following function $\bar{Q}(m,\mu)$, with $0\le m\le 1$ and $\mu \ge 1$, is the
reliability of honest majority voting by the jury with optimal voting sequence:
\[
\bar{Q}(m,\mu)=Q\left(\bar{b}(m,\mu),\bar{c}(m,\mu),\bar{a}(m,\mu)\right).
\]
An explicit expression of the function $\bar{Q}(m,\mu)$ is as follows:
\[
\bar{Q}(m,\mu)= \left\{
\begin{array}{ll}
 \bar{Q}_1(m,\mu)\equiv {\displaystyle \frac{3 \left(4 \mu ^2+1\right) m}{16 \left(\mu ^2+\mu +1\right)}+\frac{1}{2}},
    & w(m,\mu)\geq 0; \\
  \bar{Q}_2(m,\mu)\equiv {\displaystyle \frac{u(m,\mu)}{v(m,\mu)}}, & \text{otherwise}; \\
\end{array}
\right.
\]
where
\begin{eqnarray*}
  w(m,\mu) & = & 2 \left(\mu ^2+\mu +1\right)^2 \left(2 \mu ^2-\mu -4\right) +9 (2 \mu +1) \mu ^2 m^2, \\
  u(m,\mu) & = & 243 m^5 \mu^4 (-1 + 2 \mu) (1 + 2 \mu)^2 -576 m^2 \mu^3 (1 + 2 \mu) (1 + \mu + \mu^2)^3 \\
           & & \, + 512 \mu (1 + \mu + \mu^2)^5 - 108 m^3 \mu^2 (1 + \mu + \mu^2)^2 (-4 + \mu + 22 \mu^2 \\
           & & \, +12 \mu^3 + 8 \mu^4) + 12 m (1 + \mu + \mu^2)^4 (-16 + \mu (40 + \mu (31 +
          2 \mu (19 - 6 \mu + 4 \mu^2)))), \\
  v(m,\mu) & = & 128 \mu (1 + \mu + \mu^2)^3 (8 + \mu (16 + \mu (24 +
         8 \mu (2 + \mu) - 9 m^2 (1 + 2 \mu)))).
\end{eqnarray*}
For any fixed $m \in (0,1)$, function $w(m,\mu)$ in domain $\{\mu\ge 1: w(m,\mu)\ge 0\}$ and
functions $\bar{Q}_1(m,\mu)$ and $\bar{Q}_2(m,\mu)$ in domain $\{\mu: \mu\ge 1\}$ are all strictly
increasing, which can be shown by verifying that the relevant partial derivatives are positive
under corresponding constraints {(see Section~3.10 of \citet{Chen21})}. Similarly, we can shown
that, if $1\le\mu_1< \mu_2$ and $w(m,\mu_1)<0\leq w(m,\mu_2)$, then
\[
\bar{Q}(m,\mu_2)=\bar{Q}_1(m,\mu_2)>\bar{Q}_2(m,\mu_1)=\bar{Q}(m,\mu_1).
\]
Combination of these facts implies that $Q(m,\mu)$ is strictly increasing in $\mu\ge 1$ for any fixed $m \in (0,1)$.

{Now let us discuss homogeneity. Similar to our above definitions of $\bar{a}(m,\mu)$,
$\bar{b}(m,\mu)$ and $\bar{c}(m,\mu)$, the three abilities can be expressed as functions of $m$ and
the homogeneity index $\lambda$ ($0<\lambda\le 1$) as follows:
\[
\tilde{a}(m,\lambda)=\bar{a}(m,1/\lambda), \quad \tilde{b}(m,\lambda)=\bar{b}(m,1/\lambda),
 \quad \tilde{c}(m,\lambda)=\bar{c}(m,1/\lambda).
\]
It is now clear that, given the reliability of honest majority voting by the jury with optimal
voting sequence is
\[
\tilde{Q}(m,\lambda)=Q\left(\tilde{b}(m,\lambda),\tilde{c}(m,\lambda),\tilde{a}(m,\lambda)\right)=\bar{Q}(m,1/\lambda),
\]
the required monotonicity (strict decrease in $\lambda$) of $\tilde{Q}(m,\lambda)$ follows from the
monotonicity (strict increase in $\mu$) of $\bar{Q}(m,\mu)$. }

\subsection{Proof of Theorem~\ref{thm:honest-vs-strategic}}
\label{sec:proof-of-Thm-hon-vs-str}

Let heterogeneity index $\mu(a,b,c)\ge\frac74$. Then reliability of sequential honest voting is
$Q(b,c,a)=q_0(b,c)$ according to \eqref{eqn:reliability}. Since $Q_{\textrm{str}}$ is clearly
non-decreasing in $a$, with $a\le\frac47 b$ we conclude that the difference of the two
reliabilities is at least
\[
q_0(b,c)-Q_{\textrm{str}}({\textstyle\frac12},b,c,\textstyle\frac47b,0,y_1, y_2, z_1, z_2)
={\textstyle\frac1{112c}}\,\Omega(b,c,c,y_1, y_2, z_1, z_2),
\]
where
\begin{align*}
\Omega(b,c,\,& d, y_1, y_2, z_1, z_2)=7 c^2 (2 - z_1 + y_1^2 (1 + z_1) - y_2^2 (-1 + z_2) + z_2) \\
  & + b c\, (-22 + 7 z_1 + 4 z_1^2 + y_1 (3 + 7 z_1 + 4 z_1^2) - 7 z_2 + 4 z_2^2 + y_2 (-3 + 7 z_2 - 4 z_2^2))\\
  & +b^2 (7+d^2 ((1-y_1^2) (1-z_1^2)+(1-y_2^2)(1- z_2^2)).
\end{align*}
We show that $\Omega$ is non-negative when $d=c$, from which and \eqref{eqn:profile-symmetry} the
first part of the theorem follows. Note that the coefficient of $d^2$ in the $\Omega$ function is
non-negative. Therefore, $\Omega$ is lower bounded by its value at $d=0$:
\[
\Omega(b,c,0, y_1, y_2, z_1, z_2)=b^2\,\Omega(1,\mu, 0, y_1, y_2, z_1, z_2),
\]
where $\mu\equiv c/b\ge 7/4$. We can show that $\Omega(b,c,0, y_1, y_2, z_1, z_2)\ge 0$ subject to
$\mu\ge 7/4$ and $y_1, y_2, z_1, z_2\in [-1,1]$ {(Section~3.11 of \citet{Chen21})}.

The second part of the theorem follows from \eqref{eqn:profile-symmetry} and the fact that
{(Section~3.12 of \citet{Chen21})}:
\[
Q_{\textrm{sim}}({\textstyle\frac12}, a,a,a)\ge Q_{\textrm{str}}({\textstyle\frac12},a,a,a,0,y_1, y_2, z_1, z_2)
\]
for any $a\in[0,1]$ and $y_1, y_2, z_1, z_2\in[-1,1]$.

\subsection{Proof of Theorem~\ref{thm:seq-deliberation}}
\label{sec:proof-of-Thm-seq-deliberation}

Let us start with the following lemma.

\begin{lemma}\label{lem:monotone-on-ability}
  Given any a priori probability $\theta$ of $N=A$, the probability $P(\theta,a)$ of an honest
  juror of ability $a$ voting correctly is non-decreasing in $a\in [0,1]$.
\end{lemma}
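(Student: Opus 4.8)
The plan is to write $P(\theta,a)$ explicitly as the probability that an honest juror with threshold $\tau=\tau_a(\theta)$ from \eqref{eqn:honest_threshold} votes correctly, namely
\[
P(\theta,a)=\theta\,(1-F_a(\tau))+(1-\theta)\,G_a(\tau),
\]
and then to differentiate with respect to $a$. By the symmetry between $A$ and $B$ (replacing $\theta$ by $1-\theta$ and $s$ by $-s$ interchanges the two states together with the densities $f_a,g_a$ underlying \eqref{eqn:F&G}), we have $P(\theta,a)=P(1-\theta,a)$, so I may assume without loss of generality that $\theta\ge 1/2$, i.e. $2\theta-1\ge 0$.

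The threshold formula \eqref{eqn:honest_threshold} splits the argument into two regimes. In the herding regime $a\le 2\theta-1$, the honest juror ignores his signal and always votes $A$ (threshold $\tau=-1$); since $F_a(-1)=0$ and $G_a(-1)=0$, this gives $P(\theta,a)=\theta$, which is constant, hence trivially non-decreasing, in $a$. In the non-herding regime $a>2\theta-1$, the honest threshold is $\tau=(1-2\theta)/a$, and substituting the cumulative distributions \eqref{eqn:F&G} yields, after simplification, exactly the closed form already derived in the alternative proof of Theorem~\ref{thm:reliability-vs-ability} (with the last juror's ability $c$ replaced by $a$):
\[
P(\theta,a)=\frac{a^2+2a+(2\theta-1)^2}{4a}.
\]
Its derivative in $a$ is $\bigl(a^2-(2\theta-1)^2\bigr)/(4a^2)=(a-2\theta+1)(a+2\theta-1)/(4a^2)$, which is non-negative precisely when $a\ge 2\theta-1$ --- exactly the regime under consideration.

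It then remains only to glue the two regimes together. At the boundary $a=2\theta-1$ the non-herding formula evaluates to $\theta$, matching the herding value, so $P(\theta,\cdot)$ is continuous on $[0,1]$; being constant on the lower interval and non-decreasing on the upper one, it is non-decreasing throughout. The main (and only mild) obstacle is the bookkeeping at this boundary together with the initial symmetry reduction to $\theta\ge 1/2$; the analytic heart of the argument is already supplied by the derivative computation carried out for the last juror in Theorem~\ref{thm:reliability-vs-ability}, since a single honest juror facing prior probability $\theta$ is precisely the casting-juror situation analysed there.
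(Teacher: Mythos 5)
Your proposal is correct, and at its core it is the same argument as the paper's: substitute the honest threshold \eqref{eqn:honest_threshold} into the signal CDFs \eqref{eqn:F&G}, obtain a closed form for $P(\theta,a)$, and differentiate in $a$. There is, however, one substantive difference worth flagging. You compute the genuinely weighted probability of a correct vote, $P(\theta,a)=\theta\,(1-F_a(\tau))+(1-\theta)\,G_a(\tau)=\bigl(a^2+2a+(2\theta-1)^2\bigr)/(4a)$, whose derivative $\bigl(a^2-(2\theta-1)^2\bigr)/(4a^2)$ is non-negative only in the non-herding regime $a\ge|2\theta-1|$ --- which is exactly why you need, and correctly supply, the symmetry reduction to $\theta\ge 1/2$, the separate treatment of the herding regime where $P\equiv\theta$, and the continuity check at the boundary $a=2\theta-1$. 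The paper instead takes $P(\theta,a)=(1-F_a(\tau))+G_a(\tau)$, the \emph{unweighted} sum of the two conditional probabilities (despite describing it as the probability of a correct vote); that expression happens to have derivative $\bigl(a^2+(1-2\theta)^2\bigr)/(2a^2)>0$ everywhere, so the paper can wave away the boundary cases as ``trivially verified.'' Your version is the quantity actually invoked downstream in the proof of Theorem~\ref{thm:seq-deliberation} (the conditional probability that the $k$th vote is correct given posterior $\theta'$), and it agrees with the single-juror computation in the alternative proof of Theorem~\ref{thm:reliability-vs-ability}, so your more careful bookkeeping is not wasted effort --- it repairs a small imprecision in the paper's own write-up while reaching the same conclusion.
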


\begin{proof}
The honest threshold of the juror is $\tau_a(\theta)$ given in \eqref{eqn:honest_threshold} in
Section~\ref{sec:threshold}. The probability that the juror votes correctly is the probability that
his signal $s\ge \tau_a(\theta)$ when $N = A$ plus the probability that $s < \tau_a(\theta)$ when
$N = B$. That is,
\[
P(\theta,a)=\left(F_a(1)-F_a(\tau_a(\theta))\right)+\left(G_a(\tau_a(\theta))-G_a(-1)\right).
\]
Let us focus on the general expression of $\tau_a(\theta)$ in \eqref{eqn:honest_threshold}, as the
special cases can be trivially verified. We have
\[
P(\theta,a)= \frac{a^2+2a-4\theta^2+4\theta-1}{2a}.
\]
Hence it is increasing in $a$ since
\[
\frac{\partial P(\theta,a)}{\partial a}=\frac{a^2+(1-2\theta)^2}{2a^2}
\]
is positive.
\end{proof}

Now let us prove Theorem~\ref{thm:seq-deliberation} by establishing the following theorem, which
implies Theorem~\ref{thm:seq-deliberation} by setting $m=(n+1)/2$.

\begin{theorem}
For sequential deliberation of any $n$-member jury of known abilities and any $m \in \{1,\ldots,
n\}$ with $n$ odd, the probability of at least $m$ correct votes is maximized by the seniority
ordering.
\end{theorem}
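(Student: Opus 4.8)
The plan is to fix the true state to be $A$ (legitimate by the $A$--$B$ symmetry already invoked for reliability) and to recast the whole process in terms of the log-odds of the posterior. Writing $\Delta_i=\log\frac{1+a_is_i}{1-a_is_i}$ for the log-likelihood increment of the juror in position $i$ (this is the odds form of the Bayesian update \eqref{eqn:posteriori}, and the increments add because signals are independent), the posterior after the first $m$ revelations has log-odds $\ell_m=\sum_{i=1}^m\Delta_i$ with $\ell_0=0$. Since the juror in position $m$ votes $A$ exactly when $\Theta\ge 1/2$, that vote is correct if and only if $\ell_m\ge 0$. Hence the number of correct votes equals the number of nonnegative partial sums of $\Delta_1,\dots,\Delta_n$, and the theorem becomes: among all assignments of the given abilities to positions, the count of nonnegative partial sums is stochastically largest under the seniority assignment (this delivers ``at least $m$'' for every $m$ simultaneously, and in particular the majority statement of Theorem~\ref{thm:seq-deliberation}).

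The backbone is a single-swap observation. Couple all orderings on one probability space by letting each physical juror keep the signal, hence the increment, dictated by his ability. Transposing two adjacent positions $k,k+1$ then changes exactly one partial sum: $\ell_j$ is untouched for $j<k$ (same summands) and for $j\ge k+1$ (the index set $\{1,\dots,j\}$ contains both swapped positions), while $\ell_k$ moves from $\ell_{k-1}+\Delta_A$ to $\ell_{k-1}+\Delta_B$, where $A,B$ are the jurors moved out of and into position $k$. For fixed increment values this yields the deterministic fact that replacing an ascent by a descent cannot decrease the count, since the one affected partial sum only rises; so bubbling to decreasing increment order maximizes the count realization by realization. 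The obstacle is that a seniority reordering sorts by ability, not by realized increment, so this pointwise statement does not transfer directly.

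To lift it, I would show that each adjacent transposition repairing an ability inversion (placing the abler $B$, ability $\beta$, ahead of the less able $A$, ability $\alpha<\beta$, at position $k$) does not decrease $\Pr[\text{count}\ge m]$ for any $m$. Because only $\ell_k$ moves, the total is $C=C'+X_k$ with $C'=\#\{j\ne k:\ell_j\ge 0\}$ identical before and after, and $X_k^{\mathrm{before}}=\mathbf 1[\Delta_A\ge -L]$, $X_k^{\mathrm{after}}=\mathbf 1[\Delta_B\ge -L]$ for $L=\ell_{k-1}$; a level-set argument collapses the difference of the two probabilities to $\mathbb E\!\left[\mathbf 1[C'=m-1]\,(X_k^{\mathrm{after}}-X_k^{\mathrm{before}})\right]$. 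Here log-odds additivity pays off: conditionally on all other increments, $C'$ depends on $(\Delta_A,\Delta_B)$ only through $\Sigma=\Delta_A+\Delta_B$ and is nondecreasing in it (raising $\Sigma$ lifts every later partial sum), so $\{C'=m-1\}=\{\Sigma\in I\}$ for an interval $I$, and the target reduces to the two-variable inequality $\Pr[\Delta_B\ge -L,\ \Sigma\in I]\ge\Pr[\Delta_A\ge -L,\ \Sigma\in I]$ with $\Delta_A\perp\Delta_B$. Its marginal (unconditional on $\Sigma$) version is exactly Lemma~\ref{lem:monotone-on-ability}, which gives the inequality of means; the content is to preserve it under the conditioning $\Sigma\in I$.

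I expect this two-variable inequality to be the main obstacle. The enabling feature of the linear-signal model is that the state-$A$ law of $\Delta_a$ has a common shape across abilities: its density is $\tfrac1a\,\eta(\cdot)$ on $[-M_a,M_a]$ with $M_a=\log\frac{1+a}{1-a}$ and a fixed profile $\eta$ independent of $a$. Thus the likelihood ratio between two abilities is constant on the shared support, and after antisymmetrization the comparison localizes to the extra tails $(M_\alpha,M_\beta]$ and $[-M_\beta,-M_\alpha)$ that only the abler juror reaches. The positive tail helps (it pushes $\ell_k$ up) while the negative tail hurts, so the statement is not pure coupling or majorization (first-order stochastic dominance of $\Delta_\beta$ over $\Delta_\alpha$ in fact fails): it must use the state-$A$ positive bias quantitatively, encoded in $\eta(\delta)=e^{\delta}\eta(-\delta)$, to show the positive tail dominates. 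The boundary case $L=0$ (the first position) settles cleanly by the reflection $w\mapsto\Sigma-w$ about $\Sigma/2$ together with this identity, which gives confidence; the general $L$ needs a finer tail estimate that, in keeping with the other appendix arguments, I would reduce to a finite inequality and verify symbolically.
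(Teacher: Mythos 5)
Your skeleton coincides with the paper's: couple all orderings by letting each physical juror keep his own signal, repair one adjacent ability inversion at a time, observe that only the $k$th vote can change (every other position sees the same information set), and invoke the monotonicity of single-juror correctness in ability (Lemma~\ref{lem:monotone-on-ability}). Where you are more careful than the paper is the bookkeeping: you correctly note that the quantity to be compared is not $\Pr[X_k=1]$ but $\Pr[C'=m-1,\ X_k=1]$, and that $C'$ is \emph{not} independent of $X_k$, because every later partial sum contains the swapped jurors' increments. Your reduction to the two-variable inequality $\Pr[\Delta_B\ge -L,\ \Sigma\in I]\ge\Pr[\Delta_A\ge -L,\ \Sigma\in I]$ (with $\Sigma=\Delta_A+\Delta_B$, $I$ the interval on which $C'=m-1$) is correct as far as it goes. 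But that is exactly where the proposal stops, and the missing step cannot be filled: the inequality is false. With $M_a=\log\frac{1+a}{1-a}$ denoting the endpoint of the support of $\Delta_a$ as in your notation, take $\alpha$ small, $L>0$, and $I$ a nonempty subinterval of $\left(-M_\beta,\,-L-M_\alpha\right)$ (possible once $M_\beta>L+M_\alpha$). Then $\Delta_B\ge -L$ forces $\Sigma\ge -L-M_\alpha$, so the left-hand event is empty, while the right-hand event has positive probability since $\Delta_A\ge -L$ holds surely and $\Sigma\approx\Delta_B$ lands in $I$ with positive probability.

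This is not an artifact of your particular route: the statement as given (for every $m\in\{1,\dots,n\}$) is false, so no proof exists. Take $n=3$, abilities $\{\epsilon,\epsilon,\beta\}$, $m=1$, $\theta=1/2$. As $\epsilon\to 0$ the seniority order $(\beta,\epsilon,\epsilon)$ has all three log-odds partial sums carrying the sign of $\Delta_\beta$, so $\Pr[\text{at least one correct}]\to\tfrac12+\tfrac{\beta}{4}=\tfrac{16+8\beta}{32}$; the order $(\epsilon,\epsilon,\beta)$ gives $1-\Pr[s_1<0,\ s_1+s_2<0]\cdot\Pr[\Delta_\beta<0]\to 1-\tfrac38\cdot\tfrac{2-\beta}{4}=\tfrac{26+3\beta}{32}$, strictly larger for every $\beta<2$. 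The paper's own appendix proof steps over precisely the correlation you isolate --- it compares the unconditional correctness probabilities $P(\theta',a_k)$ and $P(\theta',a_{k+1})$ where the comparison restricted to the event $\{C'=m-1\}$ is what is required --- so your more careful accounting has located a genuine defect in the argument rather than a removable technicality. What can survive, and all that Theorem~\ref{thm:seq-deliberation} actually needs, is the majority case $m=(n+1)/2$; a correct proof would have to exploit that value of $m$ (or the particular level sets it produces) in an essential way, which neither your proposal nor the paper's proof currently does.
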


\begin{proof}
Given any a priori probability $\theta$ of $N=A$, let $\bm{a}=(a_1, \ldots, a_n)$ be a sequence of
abilities for sequential deliberation that maximizes the probability of at least $m$ correct votes,
and let $\bm{v}=(v_1, \ldots, v_n)$ be the sequence of the corresponding deliberation votes. If
$\bm{a}$ is not a seniority ordering, then there is $k$ ($1\le k < n$), such that $a_k < a_{k+1}$.
Consider the new deliberation sequence $\bm{a}'=(a_1, \ldots, a_{k-1}, a_{k+1}, a_k, a_{k+2},
\ldots, a_n)$, which is the same as $\bm{a}$ except that the order of jurors in positions $k$ and
$k+1$ are swapped. Let $\bm{v}'=(v_1', \ldots, v_n')$ be the corresponding deliberation votes. Then
it is clear that $v_i'=v_i$ for $i=1, \ldots, k-1, k+1, \ldots, n$, since each pair of these votes,
$v_i'$ and $v_i$, use the same set of signal-ability pairs for their posterior probability $\Theta$
of $A$, which uniquely determines their deliberation votes. In other words, $\bm{v}$ and $\bm{v}'$
can differ only in their $k$th components, $v_k$ and $v_k'$.

The two orderings $\bm{a}$ and $\bm{a}'$ differ in having at least $m$ correct votes only if they
have at least $m-1$ correct votes not counting the $k$th vote. So the only thing that matters is
the probability of the $k$th vote being correct. We now show that it is more likely to be correct
for $\bm{a}'$. Denote $\theta'=\Theta\left(\theta, (s_1,a_1), \ldots, (s_{k-1}, a_{k-1})\right)$,
the posterior probability of $A$ associated with information $\{(s_1,a_1), \ldots, (s_{k-1},
a_{k-1})\}$. Then we have that the probabilities of $v_k$ and $v_k'$ being correct are $P(\theta',
a_k)$ and $P(\theta', a_{k+1})$, respectively. According to Lemma~\ref{lem:monotone-on-ability},
$P(\theta', a_k)\le P(\theta', a_{k+1})$ since $a_k < a_{k+1}$.
\end{proof}

{
\subsection{Mathematica computations}
\label{sec:Mathematica-1}

This section provides pointers to the locations in \citet{Chen21}, a Wolfram Mathematica Notebook,
where details are provided for all the algebra from equation \eqref{eqn:reliability} onwards that
Mathematica helps with. Those to the left of the colon symbols below are locations in the paper
with natural sequence of their appearances and to the right are the corresponding locations in
\citet{Chen21}.
\begin{itemize}
  \item Equation \eqref{eqn:reliability}: Section~1.2.1 ``Reliability $Q$ with equiprobable
      alternatives''.
  \item Section~\ref{sub-sec:first-two}, expression $f_1(a,b,c)$: Item~1 of Section~2.1 ``Order
      of the first two voters''.
  \item Section~\ref{sec:strategic}, expression $Q_{\textrm{str}}({\textstyle\frac12}, a,b,c, x,
      y_1, y_2, z_1, z_2)$: Section~1.4 ``Reliability function of sequential strategic voting''.
  \item Appendix~\ref{sec:1st-vs-2nd}, expressions $f_2(a,b,c)$ and $f_3(a,b,c)$: Items~2 and 3
      of Section~2.1 ``Order of the first two voters''.
  \item Appendix~\ref{sec:1st-vs-3rd}, expressions $f_4(a,b,c)$ and $f_5(a,b,c)$: Section~2.2
      ``Order of the two end-voters''.
  \item Appendix~\ref{sec:proof-of-Thm-sim-vs-seq}, expressions $h_1(a,b,c)$ and $h_2(a,b,c)$:
      Section~1.5 ``Reliability function of simultaneous honest voting'' and Section~2.3
      ``Sequential versus simultaneous majority voting''.
  \item Appendix~\ref{sec:monotonicity}, expressions $w(m,\mu)$, $u(m,\mu)$ and $v(m,\mu)$:
      Section~2.4 ``Reliability versus heterogeneity and homogeneity''.
  \item Appendix~\ref{sec:proof-of-Thm-hon-vs-str}, expressions $\Omega(b,c,d,y_1, y_2, z_1,
      z_2)$ with $d=c$: Section~2.5 ``Strategic optimality versus heterogeneity and
      homogeneity''.
\end{itemize}
}
\end{document}